\pgfplotsset{compat=1.10}
\definecolor{Gray}{gray}{0.90}
\DeclareMathOperator*{\argmax}{arg\,max}
\newtheorem{theorem}{\bf{Theorem}}[section]
\newtheorem{lem}[theorem]{\bf{Lemma}}
\newtheorem{definition}{\bf{Definition}}
\begin{document}
\title{Resilient Strong Structural Controllability in Networks\\ using Leaky Forcing in Graphs}
\author{Waseem Abbas
\thanks{W. Abbas is with the Systems Engineering Department at the University of Texas at Dallas, Richardson, TX (Email: \texttt{waseem.abbas@utdallas.edu}).}
}

\maketitle
\begin{abstract}
This paper studies the problem of selecting input nodes (leaders) to make networks strong structurally controllable despite misbehaving nodes and edges. We utilize a graph-based characterization of network strong structural controllability (SSC) in terms of zero forcing in graphs, which is a dynamic coloring of nodes. We consider three types of misbehaving nodes and edges that disrupt the zero forcing process in graphs, thus, deteriorating the network SSC. Then, we examine a leader selection guaranteeing network SSC by ensuring the accuracy of the zero forcing process, despite $k$ misbehaving nodes/edges. Our main result shows that a network is resilient to $k$ misbehaving nodes/edges under one threat model if and only if it is resilient to the same number of failures under the other threat models. Thus, resilience against one threat model implies resilience against the other models. We then discuss the computational aspects of leader selection for resilient SSC and present a numerical evaluation.
\end{abstract}

\begin{keywords}
Strong structural controllability, zero forcing sets, resilience in networks, dynamics over graphs.
\end{keywords}







\section{Introduction}
\label{sec:intro}
The ability to control a network of agents through external control signals profoundly depends on the underlying network topology~\cite{pasqualetti2014controllability,clark2017submodularity,summers2015submodularity,li2020structural,ruths2014control}. The control signals are injected into the network via a subset of agents, typically called leaders. Computing a minimum number of leaders and their locations within the network for complete controllability are central issues in the cooperative control of networks. A popular approach for the leader selection problem, which is well-studied, is to characterize network controllability from a graph-theoretic perspective. Then formulate an appropriate vertex selection problem in graphs that can be solved through various combinatorial methods. 

To this end, several graph-based characterizations of network controllability have been presented, including those based on graph distances, zero forcing in graphs, dominating sets, matching in graphs, and others~(e.g.,~\cite{monshizadeh2014zero,chapman2013strong,zhang2013upper,jia2019strong,yaziciouglu2016graph,van2017distance,jia2020unifying}). The notion of zero forcing in graphs is particularly useful for strong structural controllability (SSC) in networks, which ascertains that for a given set of leaders, if the network is controllable for one set of edge weights, then it is controllable for all feasible edge weights. The zero forcing phenomenon, introduced in~\cite{work2008zero}, is a dynamic coloring of nodes in a graph in which only a subset of nodes are initially colored, causing other nodes to change colors according to some rules. A set of initially colored nodes that eventually color the entire graph is called a zero forcing set of the graph. Interestingly, a leader set for the strong structural controllability of a network is directly related to the idea of a zero forcing set in the underlying network graph~\cite{monshizadeh2014zero,mousavi2017structural,trefois2015zero}. A set of leaders makes a given network strong structurally controllable if the corresponding nodes in the underlying graph constitute a zero forcing set.

In this paper, we consider the leader selection for \emph{resilient strong structural controllability}, that is, the network remains SSC despite some misbehaving nodes or edges deviating from their normal functionalities. A single misbehaving agent or an edge (due to a fault or an adversary) can severely deteriorate the network's controllability. Considering the susceptibility of networks to adversarial attacks and failures, the resilient SSC is a crucial paradigm in network control systems. We consider three different abnormal behaviors by nodes (agents) and edges. These misbehaving nodes and edges primarily disrupt the zero forcing process in the underlying network graph, affecting the network's SSC. In particular, we examine leak nodes (agents that do not follow the rules of the zero forcing process), non-forcing edges (links that again disrupt the zero forcing process), and removable edges (links that are lost and affect interconnections between nodes). We then study the leader set guaranteeing the network SSC despite a given number of such misbehaving nodes/edges. For this, we utilize the idea of \emph{leaky forcing}, a variant of zero forcing in graphs recently introduced in~\cite{dillman2019leaky,alameda2022leaky}. Our main result shows that the resilient leader selection against one abnormal behavior implies resilience against the other two abnormal behaviors. The main contributions are below:

\begin{itemize}
    \item We study the resilient SSC problem in networks by connecting it to the notion of leaky forcing in graphs. 
    \item We consider three models of misbehaving nodes/edges in networks, including node leaks, non-forcing edges, and removable edges, and formulate leader selection in networks to guarantee that the network remains SSC despite $k$ such abnormal nodes/edges (Section~\ref{sec:RSSC}).
    \item We show that a given leader set is resilient to $k$ misbehaving nodes/edges under one model if and only if it is resilient to the same number of misbehaving nodes/edges in the other model (Section~\ref{sec:equivalence}).
    \item Finally, we discuss computational aspects of selecting leaders for the resilient SSC and also present a numerical evaluation (Section~\ref{sec:computation}).
\end{itemize}


There have been works discussing the impact of nodes/edges addition and deletion on the network's controllability, for instance,~\cite{pirani2022survey,mousavi2017robust,zhang2019minimal,abbas2021edge}. Similarly, some researchers have considered the effects of actuators/input node failures on the overall controllability of systems. Some also propose methods to have a sufficient number of input nodes maintaining network controllability despite such failures ~\cite{pequito2017robust,pu2012robustness,ramos2021robust,zhang2018controllability}. Our work differs from theirs as we consider node/edge misbehavior models beyond deletion. These misbehaving nodes/edges remain within the network disrupting the underlying graph-theoretic process required to ensure SSC. Our goal is to have leader sets guaranteeing SSC despite the existence of any $k$ such nodes/edges within the network. For this, we utilize the idea of leaky forcing in graphs.

The rest of the paper is organized as follows: Section~\ref{sec:prelim} presents preliminary ideas related to network controllability and zero forcing in graphs. Section~\ref{sec:RSSC} explains various threat models and formulates the leader selection problems for the resilient SSC in networks. Section~\ref{sec:equivalence} presents the main result regarding the equivalence of leader selection for resilient SSC under various threat models. Section~\ref{sec:computation} discusses the computational aspects and provide numerical illustrations. Finally, Section~\ref{sec:conclusion} concludes the paper.

\section{Preliminaries}
\label{sec:prelim}
We consider a network of agents modeled by an undirected graph $G = (V,E)$. The node set $V$, and the edge set $E$, represent agents and interconnections between agents, respectively. The edge between nodes $i$ and $j$ is denoted by an unordered pair $(i,j)$. The \emph{neighborhood} of $u$ is $\mathcal{N}_i = \{j\in V|\;(i,j)\in E\}$. 
The \emph{distance} between nodes $i$ and $j$, denoted by $d(i,j)$, is the number of edges in the shortest path between them. For a given graph $G=(V,E)$ with $|V|=n$ nodes, we define a family of symmetric matrices $\mathcal{M}(G)$ as below:
\begin{equation}
\label{eq:graph_matrices}
\begin{split}
\mathcal{M}(G) = \{M\in\mathbb{R}^{n\times n}\;&| \;M = M^\top, \text{ and for }i\ne j,\\
& M_{ij} \ne 0 \Leftrightarrow (i,j) \in E\}.
\end{split}
\end{equation}
Note that the location of zero and non-zero entries in every $M\in\mathcal{M}(G)$ remains the same, and is entirely described by the edge set of $G$.
\subsection{System Model and Graph Controllability}
We consider the following leader-follower system defined over a graph $G=(V,E)$.
\begin{equation}
\label{eq:system}
\dot{x}(t) = Mx(t) + B u(t),
\end{equation}
where $x(t)\in\mathbb{R}^{n}$ is the system state, $u(t)\in\mathbb{R}^m$ is the external input, $M\in\mathcal{M}(G)$ (as in \eqref{eq:graph_matrices}) is the system matrix, and $B\in\mathbb{R}^{n\times m}$ is the input matrix describing the \emph{leader} (input) nodes. If $V = \{v_1,v_2,\cdots,v_n\}$, and  $V' = \{\ell_1,\ell_2,\cdots,\ell_m\}\subseteq V$ is a set of leader nodes, then we define $B$ as follows:
\begin{equation}
\label{eq:B}
[B]_{ij} = \left\lbrace\begin{array}{lll}
1 & \text{if } v_i = \ell_j,\\
0 & \text{otherwise.}
\end{array}\right.
\end{equation}
We observe that the family of matrices $\mathcal{M}(G)$ describes a number of system matrices defined over $G$, including the Laplacian and adjacency matrices. 

For a given graph $G$, system matrix $M\in\mathcal{M}(G)$, and input matrix $B$, the system in \eqref{eq:system} is called \emph{controllable} if there exists an input to drive the system from an arbitrary initial state $x(t_0)$ to an arbitrary final state $x(t_f)$. In this case, we say that $(M,B)$ is a \emph{controllable pair}. A pair $(M,B)$ is controllable if and only if the rank of the \emph{controllability matrix} $\Gamma(M,B)$, defined below, is $|V|=n$ (i.e., full rank).
\begin{equation}
\label{eq:c_matrix}
\Gamma(M,B) = \left[\begin{array}{lllll}
    B & MB & M^2B & \cdots & M^{n-1}B  \\
\end{array}\right].
\end{equation}
Since leader nodes $V'$ define the input matrix $B$, we sometimes abuse the notation slightly and use \emph{$(M,V')$ is controllable} to denote that $(M,B)$ is a controllable pair. 

\begin{definition} \emph{(Network Strong Structural Controllability)}
\label{def:SSC}
A network $G = (V,E)$ with a leader set $V'\subseteq V$ is strong structurally controllable if and only if $(M,V')$ is a controllable pair \emph{for all} $M\in\mathcal{M}(G)$.
\end{definition}

Network strong structural controllability is a stronger notion compared to the (weak) structural controllability, which requires the existence of at least one matrix $M\in\mathcal{M}(G)$ for which $(M,V')$ is a controllable pair.
\subsection{Network Controllability and Zero Forcing in Graphs}
A central problem in network controllability is to compute a minimum leader set $V'\subseteq V$ that makes the network strong structurally controllable (as defined above). The problem is often referred to as the \emph{minimum leader selection} for network controllability. A graph-theoretic characterization of the minimum leader set rendering the network strong structurally controllable is remarkably useful here. Monshizadeh et al. characterized the minimum leader set for network strong structural controllability in \cite{monshizadeh2014zero} using the notion of \emph{zero forcing} in graphs, which is related to the dynamic coloring of nodes. We introduce zero forcing ideas below and then state the main result from \cite{monshizadeh2014zero}.

\begin{definition} (\emph{Zero forcing}) Given a graph $G=(V,E)$ whose nodes are initially colored either \emph{black} or \emph{white}. Consider the following node color changing rule: \emph{If $v\in V$ is colored black and has exactly one white neighbor $u$, change the color of $u$ to black.} Zero forcing is the application of the above rule until no further color changes are possible.
\end{definition}

\begin{definition} \emph{(Force)}
A force is an application of the color changing rule due to which the color of a white node $u$ is changed to black by some black node $v$. We say that $v$ \emph{forced} $u$ and denote it by $v\rightarrow u$.
\end{definition}

\begin{definition} (\emph{Input and derived sets})
For a graph $G = (V,E)$ with an initial set of black nodes $V'\subseteq V$, the \emph{derived set} of $V'$, denoted by $\mathcal{D}(V')$, is the set of black nodes obtained after applying the zero forcing rule exhaustively. The initial set of black nodes $V'$ is the set of \emph{input nodes}. 
\end{definition}

We note that for a given input set $V'$, the derived set $\mathcal{D}(V')$ is unique~\cite{work2008zero}.

\begin{definition} (\emph{Zero forcing set (ZFS) and zero forcing number}) For a graph $G = (V,E)$, an input set $V'\subseteq V$ is a ZFS if $\mathcal{D}(V') = V$ (i.e., all nodes are colored black after the exhaustive application of the zero forcing rule). We denote a ZFS by $Z_0$. The number of nodes in the minimum ZFS is the \emph{zero forcing number} of the graph and denoted by $z_0(G)$.
\end{definition}

Figure~\ref{fig:chronological} illustrates the zero forcing terms defined above.
\begin{figure}[h]
    \centering
    \includegraphics[scale=0.8]{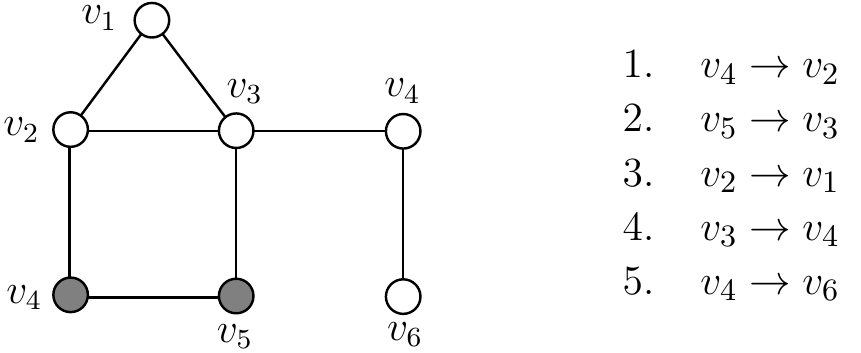}
    \caption{$V'=\{v_4,v_5\}$ is a ZFS of the graph along with a sequence of forces coloring all nodes black.}
    \label{fig:chronological}
\end{figure}

A leader set for the strong structural controllability is closely related to the notion of ZFS of the network graph. A direct consequence of Theorems~IV.4,~IV.8, and Proposition IV.9 in \cite{monshizadeh2014zero} is the following result:

\begin{theorem} \cite{monshizadeh2014zero}
The undirected network $G=(V,E)$ is strong structurally controllable with a leader set $V'\subseteq V$  (as in Definition~\ref{def:SSC}) if and only if $V'$ is a ZFS of $G$, (i.e., $\mathcal{D}(V') = V$).
\end{theorem}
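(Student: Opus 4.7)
The plan is to prove both directions using the Popov-Belevitch-Hautus (PBH) eigenvector test, which asserts that $(M,B)$ is controllable if and only if no left eigenvector $w$ of $M$ satisfies $w^\top B=0$. Since $V'$ defines $B$ through \eqref{eq:B}, the condition $w^\top B=0$ is equivalent to requiring $w_i=0$ for every $i\in V'$. The key observation I would use is that the zero/nonzero pattern of the entries of a left eigenvector of $M\in\calM(G)$ satisfies precisely the same propagation rule as the zero forcing process on $G$.

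For the forward direction (ZFS $\Rightarrow$ SSC), suppose $V'$ is a ZFS and fix any $M\in\calM(G)$. If $(M,V')$ were not controllable, PBH gives a nonzero $w$ and scalar $\lambda$ with $w^\top M=\lambda w^\top$ and $w_i=0$ for all $i\in V'$. I would then argue by induction along a valid forcing sequence: whenever a black node $v$ (so $w_v=0$ by hypothesis) has exactly one white neighbor $u$, reading the $v$-th entry of the identity $w^\top M=\lambda w^\top$ yields $\sum_{j\in\calN_v} M_{vj}w_j = 0$ (using $M_{ij}=M_{ji}$ and $w_v M_{vv}=0$). All neighbors of $v$ other than $u$ are already black, so this collapses to $M_{vu}w_u=0$; since $M_{vu}\neq 0$ by the definition of $\calM(G)$, we conclude $w_u=0$. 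Iterating along the forcing sequence that colors all of $V$ yields $w=0$, a contradiction.

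For the converse (not ZFS $\Rightarrow$ not SSC), let $W=V\setminus\calD(V')$ be the nonempty set of nodes that remain white after exhaustive forcing. I would construct an explicit $M\in\calM(G)$ together with a nonzero $w$ supported on $W$ satisfying $w^\top M=\lambda w^\top$, which by PBH witnesses uncontrollability. The construction exploits the defining property of $W$: every node $v$ either lies in $W$ or has at least two white neighbors (otherwise $v$ would force and $W$ would shrink). This degree of freedom allows us to choose the nonzero off-diagonal entries $M_{ij}$ on edges incident to $W$, together with diagonal entries $M_{vv}$, so that the eigenvector equation $\sum_j M_{vj}w_j=\lambda w_v$ holds at every $v\in V$ for some fixed $\lambda$, while respecting the sparsity pattern prescribed by $E$.

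The main obstacle is the converse: simultaneously satisfying the sparsity constraint ``$M_{ij}\neq 0$ iff $(i,j)\in E$'' on every edge while making the eigenvector equation consistent at every vertex (including those in $\calD(V')\setminus V'$, which demand $\sum_j M_{vj}w_j=0$ despite possibly having white neighbors with $w_j\neq 0$). Careful bookkeeping, likely by ordering the construction in reverse along a maximal partial forcing sequence and treating $\lambda$ as a free parameter, is needed to ensure the resulting $w$ is nonzero and $M$ lies in $\calM(G)$. The forward direction, by contrast, is essentially a single propagation argument once one notices that the forcing rule and the entrywise eigenvector equation impose the same constraint.
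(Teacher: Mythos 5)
You should first note that the paper does not actually prove this theorem: it is imported verbatim from \cite{monshizadeh2014zero} as ``a direct consequence of Theorems~IV.4, IV.8, and Proposition~IV.9'' there, so there is no in-paper proof to compare against. That said, your PBH-based argument is the standard route to this result and is essentially the one used in the cited reference. Your forward direction is complete and correct: since $M$ is real symmetric, left and right eigenvectors coincide, $w^\top B=0$ is exactly $w_i=0$ for all $i\in V'$, and the row of $w^\top M=\lambda w^\top$ at a black node $v$ with $w_v=0$ collapses to $M_{vu}w_u=0$ once all neighbors of $v$ other than $u$ have been shown to carry zero entries, so the zero pattern of $w$ propagates exactly along the forcing sequence and forces $w=0$.

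For the converse, the obstacle you flag is smaller than you suggest, and the construction closes without any delicate bookkeeping once three observations are in place. First, your structural claim should read: every node of $\calD(V')$ has \emph{zero or at least two} white neighbors (a black node with no white neighbor also does not force; what is ruled out is exactly one white neighbor). Second, take $w$ supported on $W=V\setminus\calD(V')$ with arbitrary nonzero entries there and $w_v=0$ on all of $\calD(V')$; this is stronger than what PBH requires (vanishing on $V'$ only) but makes every equation at a black node homogeneous. At a black node with no white neighbor the equation is vacuous; at one with $k\ge 2$ white neighbors it is a single linear constraint on $k\ge 2$ off-diagonal entries that must be nonzero, which is always solvable (fix the first $k-1$ so their partial sum is nonzero and solve for the last). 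Third, symmetry is not an issue: each off-diagonal entry $M_{uv}$ appears in at most one \emph{constrained} equation (the one at a black endpoint), because the equation at any white node $u$ can always be satisfied afterwards by the free diagonal entry $M_{uu}$, which is unrestricted in $\calM(G)$; entries on black--black edges multiply zero entries of $w$ and are irrelevant, and white--white entries are absorbed by the two diagonals. Taking $\lambda=0$, this yields $M\in\calM(G)$ with a nonzero $w\in\ker M$ vanishing on $V'$, so $(M,V')$ fails PBH. With these points made explicit your proposal is a correct and self-contained proof; as written, the converse is a sketch that correctly identifies the construction but stops short of verifying it.
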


Thus, ZFS in graphs is an important idea from the network controllability perspective and it completely characterizes the leader set for the strong structural controllability of the network. The above results implies that the minimum number of leaders needed for the network strong structural controllability is same as the zero forcing number of the network graph. We note that computing a minimum ZFS and zero forcing number are NP-hrad in general \cite{aazami2008hardness}. However, there are several heuristics to compute a small-sized ZFS, for instance, see \cite{brimkov2019computational,brimkov2021improved,agra2019computational,AbbasACC2022}.

\section{Resilient Strong Structural Controllability (SSC) in Networks}
\label{sec:RSSC}
The equivalence between ZFS and leader set for network SSC is significant. The zero forcing process can be disrupted by some nodes/edges abnormal behaviors or edge failures, and consequently, the network SSC can be adversely impacted. For instance, consider the network in Figure~\ref{fig:exp_1} with the leader set $V' = \{v_1,v_2\}$. If all nodes are normal, the ZF process initiated by the leader set $V'$ will eventually color all nodes in $V$ (i.e., $\mathcal{D}(V') = V$), and the network will be strong structurally controllable with $V'$. However, if $v_5$ behaves abnormally in the sense that it does not force any other node, then the zero forcing process will be hindered and $\mathcal{D}(V') \ne V$ asserting that the network is not strong structurally controllable. Similarly, if we consider all nodes normal, but the edge $(v_5,v_6)$ is removed, the ZF process will again be disrupted, causing $\mathcal{D}(V')\ne V$, which means the network is not strong structurally controllable with $V'$.

\begin{figure}[htb]
    \centering
    \begin{subfigure}[b]{0.155\textwidth}
	\centering
	\includegraphics[scale=0.475]{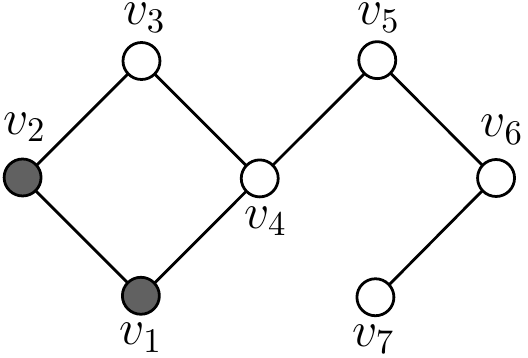}
	\caption{$V' = \{v_1,v_2\}$}
	\end{subfigure}
	\begin{subfigure}[b]{0.16\textwidth}
	\centering
	\includegraphics[scale=0.475]{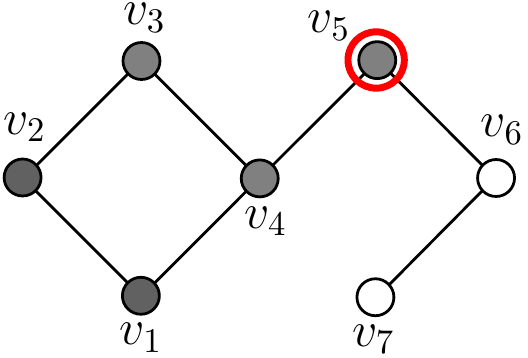}
	\caption{}
	\end{subfigure}
	\begin{subfigure}[b]{0.15\textwidth}
	\centering
	\includegraphics[scale=0.475]{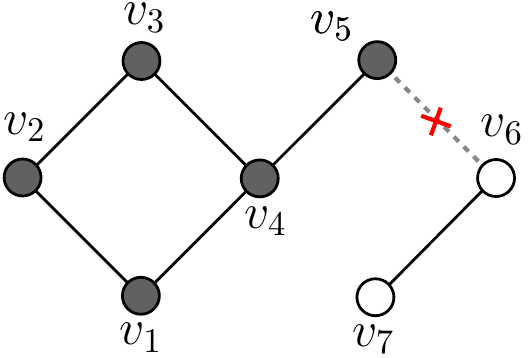}
	\caption{}
	\end{subfigure}
    \caption{(a) $V'$ is a ZFS of $G$. (b) $v_5$ is a misbehaving node not forcing any other node. (c) $(v_5,v_6)$ is a misbehaving edge.}
    \label{fig:exp_1}
\end{figure}
However, if $V' = \{v_1,v_2,v_7\}$, then the network remains SSC despite any single misbehaving node (refusing to force other nodes) or an edge. Thus, the network SSC can be preserved even in the presence of misbehaving nodes or edges through some redundant leader nodes selected carefully. Our goal here is to study: \emph{how we can guarantee resilient network SSC in the face of some misbehaving nodes and edges that might disrupt the zero forcing process and impact the network SSC}.

Next, we will consider three different abnormal node and edge behaviors disrupting the zero forcing process. Then, we present leader selections guaranteeing all nodes in the network get colored due to the zero forcing process despite a certain number of misbehaving nodes and edges, thus achieving the resilient network SSC. Our main result (in Section~\ref{sec:equivalence}) shows that resilience to one type of misbehaving nodes/edges implies resilience to the other kinds of misbehaving nodes/edges. 
\subsection{Failure Models and Resilience Problems}
We consider the following three node and edge misbehaviors that can be caused by the adversarial attack or other abnormality. All of these failures ultimately disrupt the zero forcing process.

\textbf{\emph{1) Leak (non-forcing) nodes:}} A \emph{leak} is a node $v\in V$ that does not force any other node, i.e., considering $v$ to be a leak node that is colored black and has exactly one white neighbor, then $v$ does not force its white neighbor (which it should in case $v$ was normal). A set of all leaks is the \emph{leak set}, denoted by $L\subseteq V$.

The term `leak node' is adapted from \cite{dillman2019leaky}, where such a non-forcing behavior of nodes is introduced. Practically, a leak node can be realized in multiple ways. For instance, if an additional node $\alpha$, which is not a part of the original network and is colored white, becomes adjacent to exactly one node, say $v$, in the network $G$, then $v$ is unable to force any other node in $G$. Figure~\ref{fig:exp_2} illustrates this situation.  

\begin{figure}[htb]
    \centering
    \includegraphics[scale=0.625]{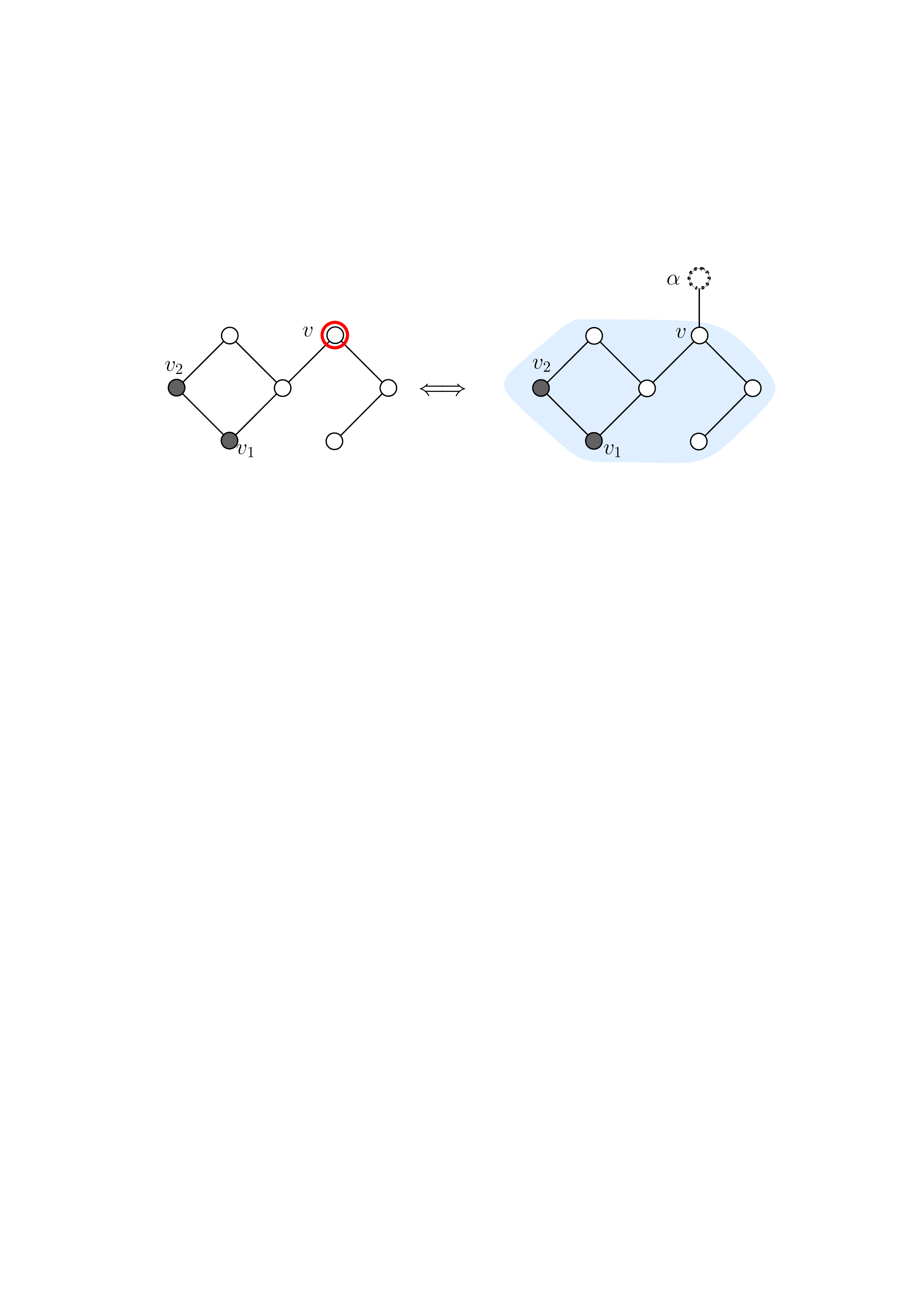}
    \caption{$v$ is a leak node not forcing any other node. Equivalently, an outside node $\alpha$ becomes adjacent to $v$ and prevents $v$ from forcing any node.}
    \label{fig:exp_2}
\end{figure}

Now the \emph{resilience problem} is to have a (minimal) leader set such that all nodes are colored at the end of the zero forcing process despite $\ell$ leak nodes, which are unknown. For a given $\ell$, computing such a leader set is referred to as the \emph{$\ell$-leaky forcing set problem} \cite{dillman2019leaky,alameda2022leaky}. We formally define the leaky derived set and leaky forcing set below:

\begin{definition} (\emph{Leaky derived set}) Given a graph $G=(V,E)$, input set $V'$, and a leak set $L$, then the set of black nodes obtained after applying the zero forcing rule exhaustively while considering the leaks in $L$ is the leaky derived set, denoted by $\mathcal{D}_L(V')$.
\end{definition}

\begin{definition} (\emph{$\ell$-leaky forcing set ($\ell$-LFS)}) An input set $V'\subseteq V$ is an \emph{$\ell$-LFS} if for \emph{any} leak set $L\subset V$ with $\ell$ leaks, $\mathcal{D}_L(V') = V$. In other words, starting with $V'$, all nodes are colored black by iteratively applying the zero forcing rule with \emph{any} $\ell$ leaks. The cardinality of the minimum $\ell$-LFS is the \emph{$\ell$-forcing number} of $G$, denoted by $z_\ell(G)$.
\label{def:lLFS}
\end{definition}

We note that for $\ell = 0$, the $\ell$-forcing number is same as the zero forcing number. Further, we know \cite{dillman2019leaky},
\begin{equation}
    z_0(G) \; \le \; z_1(G) \; \le \; \cdots \; \le \; z_{|V|}(G).
\end{equation}


\textbf{\emph{2) Non-forcing edges:}} Here, we consider edge attacks through which an edge cannot be used by either of its end nodes to force the other end node. We call such an edge as a non-forcing edge. In particular, an edge $(u,v)$ is a \emph{non-forcing edge} if $u$ cannot force $v$, and $v$ cannot force $v$. In this case, the resilience problem is to find the {$\ell$-edge forcing set} defined below:

\begin{definition} (\emph{$\ell$-Edge forcing set ($\ell$-EFS)}) For a given $G=(V,E)$ and a positive integer $\ell$, let $E_\ell\subseteq E$ be an \emph{arbitrary} subset of at most $\ell$ non-forcing edges (i.e., $|E_\ell|\le \ell$). An input set $V'\subseteq V$ is an $\ell$-EFS if there is a zero forcing process that colors all nodes in $V$ without using the edges in $E_\ell$ to force nodes. 
\label{def:lEFS}
\end{definition}

Figure~\ref{fig:1EFS} illustrates the non-forcing edge and $1$-EFS. $V' = \{v_1,v_2\}$ is a ZFS of $G$ in Figure~\ref{fig:1EFS}(a). If the edge between $v_2$ and $v_3$ is non-forcing, then the derived set consists of only three nodes $\{v_1,v_2,v_4\}$ at the end of the zero forcing process. However, if the leader set is $V' = \{v_1,v_2,v_7\}$, then all nodes are colored as a result of the zero forcing process despite any single non-forcing edge.

\begin{figure}[htb]
    \centering
    \begin{subfigure}[b]{0.16\textwidth}
	\centering
	\includegraphics[scale=0.52]{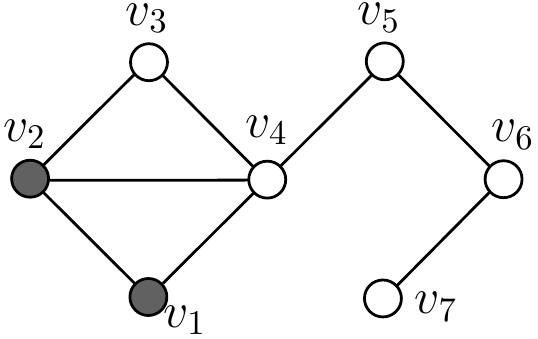}
	\caption{}
	\end{subfigure}
	\begin{subfigure}[b]{0.16\textwidth}
	\centering
	\includegraphics[scale=0.52]{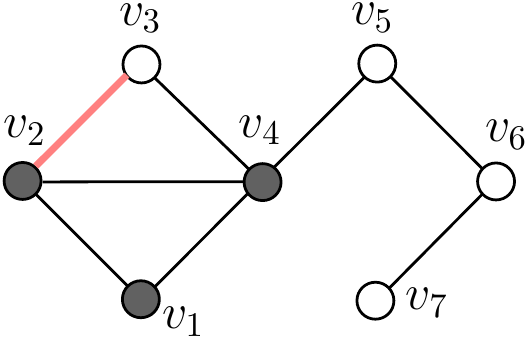}
	\caption{}
	\end{subfigure}
	\begin{subfigure}[b]{0.15\textwidth}
	\centering
	\includegraphics[scale=0.52]{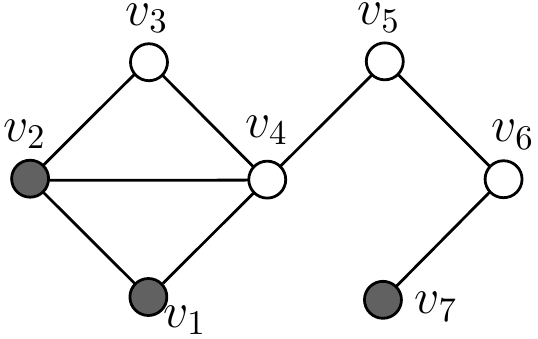}
	\caption{}
	\end{subfigure}
    \caption{(a)~$\{v_1,v_2\}$ is a ZFS given all edges are normal. (b)~$(v_2,v_3)$ is a non-forcing edge. (c)~$\{v_1,v_2,v_7\}$ is a $1$-EFS.}
    \label{fig:1EFS}
\end{figure}

\textbf{\emph{3) Removable edges:}} The third failure model we consider is the one where a maximum of $\ell$ edges are removed from the graph to disrupt the zero forcing process. The corresponding resilience problem is to have enough leaders to guarantee that despite $\ell$ edge removals, the network remains strong structurally controllable, or equivalently all nodes are colored as a result of the zero forcing. In other words, the goal is to find a minimum size \emph{$\ell$-forcing set with removable edges} defined below:

\begin{definition} \emph{$\ell$-forcing set with removable edges ($\ell$-FSR)}
For a given $G=(V,E)$ and $\ell$, consider a subgraph $G' = (V,E')$, where $E'\subseteq E$ and $|E|-|E'|\le \ell$. Then, $V'\subseteq V$ is an $\ell$-FSR of $G$ if $V'$ is a ZFS of \emph{every} such $G'$. Note that an $\ell$-FSR must also be a ZFS of $G$. 
\label{def:lFSR}
\end{definition}

We observe that making an edge non-forcing can be different from removing the edge. For instance, unlike a non-forcing edge, removing an edge can sometimes be useful, as Figure~\ref{fig:FSR} illustrates. If edge $(v_4,v_7)$ in $G$ in Figure~\ref{fig:FSR}(a) is removed, then $V' = \{v_1,v_2\}$ is a ZFS of the resulting graph (Figure~\ref{fig:FSR}(b)), thus, making the network controllable. However, if $(v_4,v_7)$ is a non-forcing edge (Figure~\ref{fig:FSR}(c)), then $\{v_1,v_2\}$ is no longer a ZFS of the network.

\begin{figure}[htb]
    \centering
    \begin{subfigure}[b]{0.16\textwidth}
	\centering
	\includegraphics[scale=0.52]{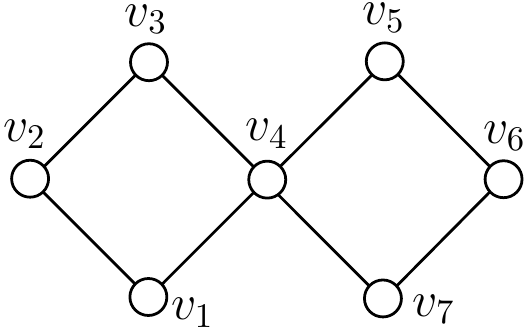}
	\caption{}
	\end{subfigure}
	\begin{subfigure}[b]{0.16\textwidth}
	\centering
	\includegraphics[scale=0.52]{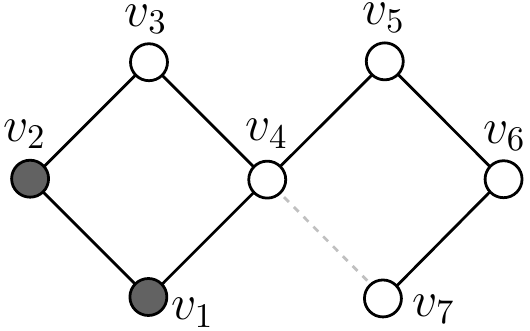}
	\caption{}
	\end{subfigure}
	\begin{subfigure}[b]{0.15\textwidth}
	\centering
	\includegraphics[scale=0.52]{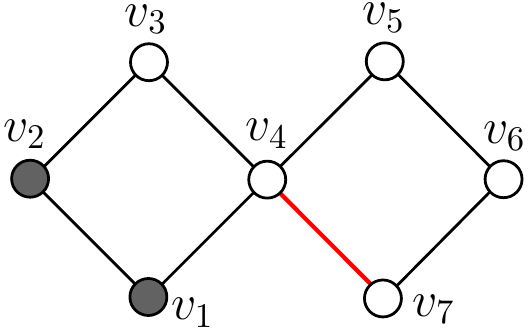}
	\caption{}
	\end{subfigure}
    \caption{(a) A graph $G$. (b) $\{v_1,v_2\}$ becomes a ZFS of $G$ after removing the edge $(v_4,v_7)$. (c) If $(V_4,v_7)$ is a non-forcing edge, then $\{v_1,v_2\}$ is a not a ZFS of $G$.}
    \label{fig:FSR}
\end{figure}

Next, we present the main result showing that a leader set resilient to one misbehavior model is also resilient to the other models.
\section{Equivalence of Resilient Leader Selection for Various Failure Models}
\label{sec:equivalence}
Here, for a given $G = (V,E)$ and $\ell$, we show the equivalence between $\ell$-LFS, $\ell$-EFS and $\ell$-FSR. As a result, we show that a leader set $V'\subseteq V$ ensures resilient controllability against one misbehavior model if and only if it extends resilience to the other two models (discussed above). For instance, a leader set $V'$ that is resilient to $\ell$ non-forcing nodes must also be resilient to $\ell$ non-forcing edges and $\ell$ removable edges simultaneously. We introduce the following terms as in \cite{barioli2010zero,alameda2022leaky}.

\begin{definition}
Consider a graph $G = (V,E)$, input set $V'\subseteq V$ and the corresponding derived set $\mathcal{D}(V')$, then we define the following terms:
\begin{itemize}
    \item A \emph{chronological list of forces} is a list of forces recorded in the order in which they are performed to construct the derived set.
    \item A \emph{forcing process} $F$ is a set of forces containing a chronological ordering of forces through which all nodes in $V$ are colored black (i.e., $\mathcal{D}(V') = V$).
    \item A \emph{forcing chain} is a sequence of forces $v_i\rightarrow v_{i+1}$, for $i = 1,2,\cdots, k-1$. We denote such a forcing sequence by $v_1\rightarrow v_2\rightarrow  \cdots \rightarrow v_{k-1}\rightarrow v_k$.
    \item A \emph{maximal forcing chain}  $v_1\rightarrow v_2\rightarrow  \cdots \rightarrow v_{k-1}\rightarrow v_k$ is a forcing chain such that $v_1\in V'$ and $v_k$ does not force ant other node in $G$.
    \item A \emph{total forcing set} of $V'$, denoted by $\mathcal{F}(V')$, is a set of all possible forces given an input $V'$, i.e., $v_i\rightarrow v_j \in \mathcal{F}(V')$ if there is a forcing process in $G$ containing $v_i\rightarrow v_j$.
    \item A \emph{total forcing set with leaks} $L$ and input set $V'$, denoted by $\mathcal{F}_L(V')$, is a set of all possible forces given an input set $V'$ and leaks $L$. In other words, if $v_i\rightarrow v_j \in \mathcal{F}_L(V')$, then $v_i\notin L$ and there is a forcing process containing the force $v_i\rightarrow v_j$.
\end{itemize}
\end{definition}

Next, consider $G=(V,E)$, a ZFS $V'\subset V$, and a forcing process $F$ with $V'$. Then, we define the following notations: 
\begin{equation}
\label{eq:com_1}
F(V') = \{x\rightarrow y \in F: \; y\notin V'\}.
\end{equation}

Similarly, 
\begin{equation}
\label{eq:com_1}
F\backslash F(V') = \{x\rightarrow y \in F: \; y\in V'\}.
\end{equation}

We now state some results from~\cite{alameda2022leaky} that will be used later.
\begin{lem}\cite{alameda2022leaky}
\label{lem:combined_ZFP}
Consider $G=(V,E)$ and a ZFS $V'$. Let $F$ and $F'$ be two forcing processes. Then, $(F\backslash F(\tilde{V}))\cup F'(\tilde{V})$ is a forcing process with $V'$ for any $\tilde{V}$ obtained from $V'$ using $F$. Here, 
\end{lem}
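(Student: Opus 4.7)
The plan is to construct an explicit chronological ordering of the combined force set $(F \setminus F(\tilde{V})) \cup F'(\tilde{V})$ and verify that each force is valid in that order. Since $\tilde{V}$ is obtained from $V'$ by applying some initial segment of $F$, the forces of $F$ whose targets lie in $\tilde{V} \setminus V'$ (i.e., exactly the forces of $F \setminus F(\tilde{V})$, because targets of forces are always initially white) form a chronological prefix of $F$ that takes $V'$ to $\tilde{V}$. I would schedule these first, in the order inherited from $F$, and then append the forces of $F'(\tilde{V})$ in the order inherited from $F'$.

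The first phase is immediate from the definition of $\tilde{V}$. The core of the argument is verifying the second phase. Enumerate $F'(\tilde{V})$ as $x_1 \to y_1, x_2 \to y_2, \dots$ respecting the $F'$ ordering, and show by induction on $i$ that when $x_i \to y_i$ is applied in the modified process the three conditions hold: (a) $x_i$ is black, (b) $y_i$ is white, and (c) every neighbor $z \ne y_i$ of $x_i$ is black. Each condition is transferred from the analogous instant in $F'$. A vertex that is black at that instant in $F'$ is either in $V' \subseteq \tilde{V}$, or is the target of an earlier force $w \to u$ of $F'$; in the latter case, if $u \in \tilde{V}$ it is already black in the modified process after the first phase, while if $u \notin \tilde{V}$ then $w \to u \in F'(\tilde{V})$ and it is scheduled strictly before $x_i \to y_i$ in the second phase, so it has already fired. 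For (b), targets of $F'(\tilde{V})$ are distinct and all lie outside $\tilde{V}$, so $y_i$ is not colored by any prior force of either phase.

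The main obstacle I anticipate is precisely item (c): ensuring that each non-$y_i$ neighbor of $x_i$ that $F'$ had turned black by this moment is also black in the modified timeline. This reduces to the dichotomy above — ``black in $F'$ before this point'' equals ``in $V'$, or target of an earlier $F'$ force'' — and the key observation that such a target either sits in $\tilde{V}$ (handled by the first phase) or corresponds to a force that is retained in $F'(\tilde{V})$ earlier in the schedule. Once (a)--(c) are established, each force is legitimate, and the end state contains $\tilde{V} \cup (V \setminus \tilde{V}) = V$, so the combined list is a forcing process with input set $V'$.
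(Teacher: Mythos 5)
The paper does not prove this lemma; it is imported verbatim from the cited reference \cite{alameda2022leaky} (the statement in the paper even ends with a dangling ``Here,''), so there is no in-paper proof to compare against. Judged on its own, your argument is correct and complete: you correctly observe that $F\setminus F(\tilde V)$ (the forces of $F$ whose targets lie in $\tilde V$) is exactly the chronological prefix of $F$ that produces $\tilde V$ from $V'$, and your induction for the second phase establishes the right invariant --- that the black set in the spliced process at the moment of each force $x_i\rightarrow y_i$ of $F'(\tilde V)$ contains the black set of $F'$ at the corresponding moment --- which transfers validity of the force (forcer black, all non-target neighbors black) while the target $y_i\notin\tilde V$ remains white because targets are pairwise distinct and lie outside $\tilde V$. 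Since extra black vertices can never invalidate a force on a still-white target, and the targets of $F'(\tilde V)$ are precisely $V\setminus\tilde V$, the concatenation colors all of $V$. This is the standard argument for this splicing lemma, and it is the one the rest of the paper implicitly relies on in Lemmas~\ref{lem:1EFS} and~\ref{lem:EFStoLFS}.
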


\begin{lem} \cite{alameda2022leaky}
\label{thm:1-LFS_previous}
A set $V'$ is a $1$-LFS if and only if $\forall v\in V\setminus V'$, there exists $x\rightarrow v \in \mathcal{F}(V')$, $y\rightarrow v\in\mathcal{F}(V')$, where $y\notin x$.
\end{lem}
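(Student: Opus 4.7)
The plan is to prove both directions separately, using Lemma~\ref{lem:combined_ZFP} as the key combinatorial tool for splicing together forcing processes.

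For the forward direction, suppose $V'$ is a $1$-LFS. Since the $0$-leak case is ordinary zero forcing, $V'$ is in particular a ZFS of $G$, so for every $v \in V \setminus V'$ there is at least one force $x \to v \in \mathcal{F}(V')$ arising from some standard forcing process from $V'$. To extract a distinct second forcer, I would invoke the $1$-LFS property with the singleton leak set $L = \{x\}$: by definition there must then exist a forcing process $F'$ from $V'$ that colors all of $V$ while never using $x$ as a forcer. Inside $F'$, some vertex $y \neq x$ performs the force $y \to v$. Because $F'$ is also a valid standard forcing process (the leak restriction only removes available forces, it never legitimises an invalid one), the force $y \to v$ lies in $\mathcal{F}(V')$, giving the required distinct second forcer.

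For the reverse direction, suppose every $v \in V \setminus V'$ admits two distinct forces $x \to v$ and $y \to v$ in $\mathcal{F}(V')$. Fix any single leak $\ell$; the goal is to construct a forcing process $\tilde F$ from $V'$ that never uses $\ell$ as a forcer. I would proceed by iterated swapping based on Lemma~\ref{lem:combined_ZFP}. Starting from any forcing process $F$ from $V'$, if $\ell$ appears as a forcer in $F$, let $\ell \to v$ be the \emph{first} such force and let $\tilde V$ be the black set obtained from $V'$ by $F$ immediately before this force. By hypothesis there is a distinct forcer $y \neq \ell$ with $y \to v \in \mathcal{F}(V')$, so some forcing process $F'$ from $V'$ contains $y \to v$. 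Applying Lemma~\ref{lem:combined_ZFP} with this $\tilde V$ yields a new forcing process $F'' = (F \setminus F(\tilde V)) \cup F'(\tilde V)$ whose prefix (the forces of $F$ producing vertices of $\tilde V$) contains no $\ell$-force by construction, and which now forces $v$ through $y$ rather than through $\ell$.

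The main obstacle will be arguing termination of this swapping procedure, i.e., that after finitely many swaps one reaches a process with no $\ell$-forces at all. The delicate point is that the spliced-in suffix $F'(\tilde V)$ may itself contain $\ell$-forces, so a naive ``count of $\ell$-forces'' is not guaranteed to decrease. The progress measure I expect to use is the black set reached just before the first $\ell$-force: after each swap this set strictly grows (it now contains $v$, and more generally everything $F'$ can force before \emph{its} first $\ell$-force), and since $V$ is finite the procedure must terminate either with no $\ell$-force remaining or with the entire vertex set already black, in both cases yielding the desired $\{\ell\}$-leaky forcing process. As a backup, should the splicing argument prove too fragile to formalise cleanly, I would instead induct on $|V \setminus \mathcal{D}_{\{\ell\}}(V')|$ and show directly that some vertex outside the leaky derived set is forceable from a black vertex distinct from $\ell$, then reassemble a global $\{\ell\}$-leaky process from these local forces via Lemma~\ref{lem:combined_ZFP}.
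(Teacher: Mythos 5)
The paper does not actually prove this lemma: it is imported verbatim from \cite{alameda2022leaky} (and the ``$y\notin x$'' in the statement is a typo for $y\ne x$), so there is no in-paper argument to compare yours against. Judged on its own, your reconstruction is correct and uses exactly the tool the paper makes available, namely the splicing lemma (Lemma~\ref{lem:combined_ZFP}). The forward direction is clean: a $1$-LFS is in particular a ZFS, giving one forcer $x\rightarrow v$, and taking the leak set $L=\{x\}$ produces a full forcing process in which $v$ is forced by some $y\ne x$; since a leaky process is a fortiori an ordinary forcing process, $y\rightarrow v\in\mathcal{F}(V')$.

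The only place that needs more care is the termination of your swapping procedure in the reverse direction, and your proposed progress measure does work, for a reason worth making explicit. Let $\ell\rightarrow v$ be the first $\ell$-force of the current process $F$, performed at black set $\tilde V$, and let $F'$ witness $y\rightarrow v$ with $y\ne\ell$. In $F''=(F\setminus F(\tilde V))\cup F'(\tilde V)$ the prefix colors $\tilde V$ with no $\ell$-forces, and as long as $v$ is still white the unique white neighbor of $\ell$ remains $v$ (any newly blackened $w\ne v$ cannot be a white neighbor of $\ell$ at $\tilde V$, since $v$ was the only one); so the only $\ell$-force that could occur before $v$ is colored is $\ell\rightarrow v$ itself. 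But each vertex is the target of exactly one force in a forcing process, and $y\rightarrow v\in F'(\tilde V)\subseteq F''$, so $\ell\rightarrow v\notin F''$. Hence the first $\ell$-force of $F''$, if any, occurs only after $v$ (and all of $\tilde V$) is black, and the black set recorded just before the first $\ell$-force strictly grows. Since $V$ is finite, finitely many swaps yield a process with no $\ell$-force, i.e.\ a process valid for the leak $\{\ell\}$, and as $\ell$ was arbitrary, $V'$ is a $1$-LFS. With that one observation spelled out, your argument is complete.
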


\begin{theorem}\cite{alameda2022leaky}
\label{thm:ell-LFS_previous}
A set $V'$ is a $\ell$-LFS if and only if $V'$ is an $(\ell-1)$-LFS for every set of of $\ell-1$ leaks $L$ and $v\in V\setminus V'$, there exists $x\rightarrow v \in \mathcal{F}_L(V')$, $y\rightarrow v\in\mathcal{F}_L(V')$, where $y\notin x$.
\end{theorem}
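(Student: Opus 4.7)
The plan is to prove both directions of the characterization, using induction on $\ell$ with Lemma~\ref{thm:1-LFS_previous} as the base case and Lemma~\ref{lem:combined_ZFP} as the key splicing tool for the harder direction.

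For the forward direction, suppose $V'$ is an $\ell$-LFS. First, $V'$ is an $(\ell-1)$-LFS: given any leak set $L'$ with $|L'|=\ell-1$, extend it to an $\ell$-leak set $L'\cup\{u\}$ for some $u\notin L'$; the $\ell$-LFS property yields a forcing process with leaks $L'\cup\{u\}$, and this process remains valid with the smaller leak set $L'$, since fewer leaks only expand the set of admissible forces. For the second condition, fix $L$ with $|L|=\ell-1$ and $v\in V\setminus V'$. Choose $u\notin L$ (taking $u=v$ when $v\notin L$) and apply the $\ell$-LFS property to $L\cup\{u\}$, obtaining a forcing process in which $v$ is forced by some $x\notin L\cup\{u\}$; hence $x\rightarrow v \in \mathcal{F}_L(V')$. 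Reapplying the $\ell$-LFS property to $L\cup\{x\}$ yields a forcing process in which $v$ is forced by some $y\notin L\cup\{x\}$, giving a distinct second force $y\rightarrow v \in \mathcal{F}_L(V')$.

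For the reverse direction, assume conditions (a) and (b) hold; fix any $\ell$-leak set $L$ and pick $w\in L$, setting $L' = L\setminus\{w\}$. By (a), there is a forcing process $F$ with leaks $L'$ that colors $V$. If $w$ performs no force in $F$, then $F$ itself witnesses that $V'$ handles the leak set $L$. Otherwise, I would iteratively eliminate $w$'s forces: for the chronologically earliest force $w\rightarrow v$ in $F$, condition (b) supplies two distinct forces $x\rightarrow v, y\rightarrow v \in \mathcal{F}_{L'}(V')$, at least one of which, say $x\rightarrow v$, has $x\ne w$. Letting $F_v$ be a forcing process with leaks $L'$ containing $x\rightarrow v$, Lemma~\ref{lem:combined_ZFP} splices $F_v$ into the current process at the step where $v$ is colored, substituting $x\rightarrow v$ for $w\rightarrow v$ while leaving earlier forces intact. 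Repeating this procedure removes every force by $w$, and the resulting process uses no leak in $L$, certifying that $V'$ is an $\ell$-LFS.

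The main obstacle is the bookkeeping in the splicing step: I must ensure, at each substitution, that the replacement forcer $x$ is already black in the modified process at the moment it is needed, and that subsequent forces previously attributed to $w$ still admit valid alternatives from condition (b) after the swap. Processing the forces of $w$ in chronological order, and invoking Lemma~\ref{lem:combined_ZFP} with the set $\tilde{V}$ of nodes colored just before the targeted force, should preserve this invariant: the forcer $x$, being non-leak and either in $V'$ or previously colored along $F_v$, is black when needed. A secondary subtlety is that if $v\in L$ itself (so $v$'s downstream forces in $F$ must also be rerouted), the same argument must be applied recursively to each such downstream node; condition (b) is strong enough to supply the required alternatives, and the chronological ordering ensures termination. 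This invariant maintenance is the technical heart of the proof.
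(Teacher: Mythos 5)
First, note that the paper itself does not prove this theorem: it is quoted from \cite{alameda2022leaky}, so there is no in-paper proof to compare against. Judged on its own, your forward direction is complete and correct. Padding an $(\ell-1)$-leak set $L$ to $\ell$ leaks, extracting a forcer $x$ of $v$, and then re-running the $\ell$-LFS property with $x$ itself declared a leak is exactly the right way to obtain two distinct forces of $v$; the inclusion $\mathcal{F}_{L\cup\{x\}}(V')\subseteq\mathcal{F}_{L}(V')$ that you implicitly use is valid because a forcing process that respects a larger leak set also respects the smaller one.

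The reverse direction, however, has a genuine gap precisely where you flag ``the technical heart'': you set up an iterative rerouting of $w$'s forces and assert, without proof, that the invariant is preserved and that ``the chronological ordering ensures termination.'' Neither claim is established, and the worries about ``subsequent forces previously attributed to $w$'' and about recursing on a leaked $v$'s downstream forces show the argument is not closed. The observation that collapses all of this is that in any forcing process each vertex performs at most one force, and that at the moment $w\rightarrow v$ becomes valid in $F$ the vertex $v$ is the \emph{unique} white neighbor of $w$. Consequently a single application of Lemma~\ref{lem:combined_ZFP} finishes the proof: taking $\tilde{V}$ to be the black set just before $w\rightarrow v$ and $F'$ a process with leaks $L'=L\setminus\{w\}$ containing $x\rightarrow v$ with $x\neq w$, the spliced process $(F\setminus F(\tilde{V}))\cup F'(\tilde{V})$ contains no force by $w$ at all --- none in the prefix, since $w\rightarrow v$ was $w$'s only force in $F$ and $v\notin\tilde{V}$, and none in $F'(\tilde{V})$, since every neighbor of $w$ other than $v$ already lies in $\tilde{V}$ while $v$ is forced by $x$ in $F'$. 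Hence the spliced process respects the full leak set $L=L'\cup\{w\}$, no iteration is needed, and the termination question never arises. Also note that if $v\in L'$ then $v$ performs no force in $F$ in the first place, so the ``downstream'' recursion you describe is vacuous. Without the single-splice observation (or an actual termination argument for your iteration), the backward implication is not proved as written.
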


\subsection{Main Result}
Our main result here is to show the following:

\begin{theorem}
\label{thm:main_result}
Given a graph $G=(V,E)$, input set $V'\subseteq V$, and a positive integer $\ell\le |V|$, the following statements are equivalent:
\begin{enumerate}
    \item $V'$ is an $\ell$-LFS.
    \item $V'$ is an $\ell$-EFS.
    \item $V'$ is an $\ell$-FSR.
\end{enumerate}
\end{theorem}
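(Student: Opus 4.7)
The plan is to prove Theorem~\ref{thm:main_result} by establishing the cycle of implications $(1) \Rightarrow (2) \Rightarrow (3) \Rightarrow (1)$, with induction on $\ell$ where needed. The base case $\ell = 0$ is immediate, since all three conditions collapse to the statement that $V'$ is a zero forcing set of $G$.

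The direction $(2) \Rightarrow (3)$ is the cleanest. Given an $\ell$-EFS $V'$ and any subgraph $G' = G - E_\ell$ with $|E_\ell| \le \ell$, I would view $E_\ell$ as the non-forcing edges in $G$ and use the $\ell$-EFS hypothesis to obtain a forcing process $F$ in $G$ that avoids $E_\ell$. A short chronological induction then shows that $F$ is itself a ZFS process of $G'$: at each step $u \to v$ in $F$ we have $(u,v) \notin E_\ell$, so $v \in N_{G'}(u)$; and all other $G$-neighbors of $u$ are black at that moment (otherwise the $G$-force $u \to v$ would be invalid), so restricting to $N_{G'}(u) \subseteq N_G(u)$ only drops black nodes and leaves $v$ as the unique white neighbor.

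For $(1) \Rightarrow (2)$, given an $\ell$-LFS $V'$ and non-forcing edges $E_\ell = \{(u_i, v_i)\}_{i=1}^{\ell}$, the naive attempt of designating one endpoint per edge as a leak is insufficient, because the non-leak endpoint may still force along a forbidden edge. My plan is to iterate: starting from any forcing process $F_0$, detect each force in $F_0$ that uses an $E_\ell$-edge and reroute around it using Lemma~\ref{thm:1-LFS_previous} (or its generalization in Theorem~\ref{thm:ell-LFS_previous}) to locate an alternative forcer of the affected vertex, then splice this alternative into the process via Lemma~\ref{lem:combined_ZFP}. Inducting on $|E_\ell|$ and exploiting the fact that any single forcing process uses at most one of the two forces along a given edge should convert $\ell$-leak tolerance into $\ell$-non-forcing-edge tolerance.

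The main obstacle will be $(3) \Rightarrow (1)$, because removing edges can simultaneously disable some forces and enable others (as Figure~\ref{fig:FSR} illustrates), so a ZFS process of a subgraph $G'$ need not lift to a leak-compatible process in $G$. My plan is an induction on $\ell$ via Theorem~\ref{thm:ell-LFS_previous}: assuming the implication for $\ell - 1$, an $\ell$-FSR $V'$ is already $(\ell - 1)$-LFS, and it remains to verify, for each non-input $v$ and each $(\ell - 1)$-leak set $L$, that $v$ admits two distinct forcers in $\mathcal{F}_L(V')$. I would extract these two forcers from two different edge-removal instances — roughly, removing a canonical outgoing edge of each leak in $L$ together with one candidate forcing edge at $v$ — and combine the resulting ZFS processes via Lemma~\ref{lem:combined_ZFP} into $G$-valid forcing processes compatible with $L$. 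The delicate step is verifying that forces valid in the pruned subgraphs remain valid in $G$ under the leak constraint, which is precisely where the combination lemma and careful chronological bookkeeping will be essential.
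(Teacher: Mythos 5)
Your direction $(2)\Rightarrow(3)$ is sound and is essentially the paper's own argument (the forward half of Theorem~\ref{thm:EFSRMS}): a process that never forces across $E_\ell$ survives the deletion of $E_\ell$ because neighborhoods only shrink. The real issue is the other two legs. By choosing the cycle $(1)\Rightarrow(2)\Rightarrow(3)\Rightarrow(1)$ you have committed yourself to proving $(3)\Rightarrow(1)$ directly, which is the hardest pairing, and your sketch stalls exactly where you predict it will: a force that is valid in a pruned subgraph $G'$ need not be valid in $G$ (the forcer may have extra white neighbors in $G$), and neither Lemma~\ref{lem:combined_ZFP} nor ``careful chronological bookkeeping'' resolves this, because the combination lemma only splices processes of the \emph{same} graph. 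Similarly, your iterative rerouting for $(1)\Rightarrow(2)$ is incomplete: splicing in $F'(\tilde{V})$ to avoid one forbidden edge can introduce new forces along \emph{other} forbidden edges, and you give no invariant guaranteeing that the iteration over $|E_\ell|$ terminates with all $\ell$ edges simultaneously avoided. That splicing argument is exactly the paper's Lemma~\ref{lem:1EFS}, and it is used only for the base case $\ell=1$.

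The missing idea is the pair of structural lemmas that make the induction on $\ell$ (the resilience parameter, not the number of bad edges in a fixed process) go through, and that let the paper route everything through $(2)$ instead of attacking $(3)\Rightarrow(1)$. Lemma~\ref{lem:intermediate}: if $V'$ is already $(\ell-1)$-EFS, then under any $\ell$ non-forcing edges each such edge $(u,v)$ ends the process with a black endpoint $u$ whose closed neighborhood minus $v$ is entirely black --- so the edge's only remaining effect is to stop $u$ from forcing $v$, which is exactly a leak at $u$ (and is also harmless to delete). Lemma~\ref{lem:inter_2}: if $V'$ is $(\ell-1)$-LFS, then any $\ell$ leaks end up black with at most one white neighbor each, so each leak is equivalent to a single non-forcing (or deleted) edge. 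These two facts convert between the three failure models \emph{after} the $(\ell-1)$-resilient process has run, with no rerouting needed; combined with the easy contrapositive for $1$-FSR~$\Rightarrow$~$1$-EFS (Lemma~\ref{lem:RMSEFS}), they yield both biconditionals of Theorems~\ref{thm:LSFEFS} and~\ref{thm:EFSRMS}. Without something playing this role, your steps $(1)\Rightarrow(2)$ and $(3)\Rightarrow(1)$ remain genuinely open.
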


We recall that notions of $\ell$-LFS, $\ell$-EFS, and $\ell$-FSR are explained in Definitions~\ref{def:lLFS}, \ref{def:lEFS}, and \ref{def:lFSR}, respectively. To prove Theorem~\ref{thm:main_result}, we need some intermediate results.

\begin{lem}
\label{lem:1EFS}
If $V'$ is a $1$-LFS of $G = (V,E)$, then for every edge $e = (u,v)\in E$, there exists a zero forcing process $F_e$ that does not use $e$, i.e., $u\rightarrow v \notin F_e$ and  $v\rightarrow u \notin F_e$.
\end{lem}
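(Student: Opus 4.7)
The plan is a case analysis on whether the endpoints $u,v$ of $e$ lie in the leader set $V'$, using Lemma~\ref{thm:1-LFS_previous} to secure alternative forces into $u$ and $v$, and Lemma~\ref{lem:combined_ZFP} to splice two forcing processes together when necessary.

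The easy cases come first. If $u,v\in V'$, neither can be the target of a force, so any forcing process of $V'$ serves as $F_e$. If exactly one endpoint is a leader, say $u\in V'$ and $v\notin V'$, then $v\to u$ cannot occur in any forcing process. By Lemma~\ref{thm:1-LFS_previous} applied at $v$, there exist distinct $x\neq y$ with $x\to v,\;y\to v\in\mathcal{F}(V')$; at least one of them, say $x$, differs from $u$, and any forcing process realizing $x\to v$ avoids both $u\to v$ and $v\to u$.

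The substantive case is $u,v\notin V'$, where both orientations of $e$ must be ruled out. Applying Lemma~\ref{thm:1-LFS_previous} at $v$ yields a forcing process $F_v$ containing some $x\to v$ with $x\neq u$, and applying it at $u$ yields a process $F_u$ containing some $y\to u$ with $y\neq v$. Neither of these is by itself guaranteed to avoid $e$: $F_v$ may still contain $v\to u$ after $v$ is colored, and $F_u$ may contain $u\to v$. To eliminate both uses simultaneously I would splice the two processes. Let $\tilde V$ be the set of black nodes in $F_v$ immediately after $x\to v$ is applied, so $v\in\tilde V$, and set $F_e:=(F_v\setminus F_v(\tilde V))\cup F_u(\tilde V)$; Lemma~\ref{lem:combined_ZFP} then certifies that $F_e$ is a valid forcing process from $V'$.

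The final step is to verify that $F_e$ uses neither $u\to v$ nor $v\to u$. The force $u\to v$ is absent from $F_v$ by the choice of $F_v$, hence absent from $F_v\setminus F_v(\tilde V)$, and it is absent from $F_u(\tilde V)$ because $v\in\tilde V$ while every target in $F_u(\tilde V)$ lies outside $\tilde V$. Symmetrically, $v\to u\notin F_u$ by the choice of $F_u$, and provided $u\notin\tilde V$ the target $u$ lies outside $\tilde V$ so $v\to u$ cannot be in $F_v\setminus F_v(\tilde V)$ either. The main obstacle I anticipate is the degenerate sub-case $u\in\tilde V$, i.e., $u$ is already colored in $F_v$ before $v$; here the node forcing $u$ in $F_v$ must be some $z\neq v$ (since $v$ was still white at that moment), which directly rules out $v\to u\in F_v$, so $F_v$ itself already avoids $e$ and the splicing is unnecessary in this branch.
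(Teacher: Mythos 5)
Your proof is correct and follows essentially the same route as the paper's: both arguments obtain alternative forcers from Lemma~\ref{thm:1-LFS_previous} and splice two forcing processes together via Lemma~\ref{lem:combined_ZFP}, checking that neither orientation of $e$ survives in the spliced process. The only cosmetic difference is that the paper cuts the process containing $u\to v$ just \emph{before} $v$ is forced and appends a process realizing $x\to v$ with $x\ne u$, whereas you cut just \emph{after} $x\to v$ and append a process realizing $y\to u$ with $y\ne v$; both choices close the argument, and your explicit handling of the sub-case $u\in\tilde V$ is sound.
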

\begin{proof}
Consider a forcing process $F$ which contains $u\rightarrow v$. By Lemma~\ref{thm:1-LFS_previous}, there exists a forcing process $F'$ such that $x\rightarrow v$, where $x\ne u$. It suffices to show that there exists a forcing process $F_e$ such that $x\rightarrow v$ and $y \rightarrow u$, where $y\ne v$. If $u\in V'$, then $F_e = F'$. So, assume $u\notin V'$. Now, again consider $F$ and let $\tilde{V}$ be the set of black vertices obtained from $V'$ such that $u\rightarrow v$ is valid given $\tilde{V}$, but $v\notin \tilde{V}$. Note that at this point, $u$ and all of its neighbors except $v$ are colored black through $F$. Moreover, a node can only force one node. Thus, $u$ will not force any node in $F'(\tilde{V})$, i.e., $u\rightarrow q \notin F'(\tilde{V}),\; \forall q \in V$. Now, consider $F_e = (F\backslash F(\tilde{V})) \cup F'(\tilde{V})$. Note that $u\rightarrow v \notin F_e$ and $v\rightarrow u \notin F_e$. By Lemma~\ref{lem:combined_ZFP}, $F_e$ is a forcing process with input $V'$, which proves the desired claim. 
\end{proof}
\begin{lem}
\label{lem:EFStoLFS}
If $V'$ is a $1$-EFS, then $V'$ is a $1$-LFS
\end{lem}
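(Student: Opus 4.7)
The plan is to invoke the characterization of a $1$-LFS given in Lemma~\ref{thm:1-LFS_previous}, which says $V'$ is a $1$-LFS if and only if every $v \in V \setminus V'$ admits two \emph{distinct} forces $x \to v$ and $y \to v$ in $\mathcal{F}(V')$. So it suffices to produce, for each non-leader $v$, two distinct in-neighbors that force $v$ in some forcing process starting from $V'$.

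First I would observe that a $1$-EFS is in particular a ZFS (take the empty set of non-forcing edges, which satisfies $|E_\ell| \le 1$). So for any $v \in V \setminus V'$, pick any forcing process $F$ of $G$ with input $V'$, and let $x$ be the unique node that forces $v$ in $F$. This immediately gives $x \to v \in \mathcal{F}(V')$, which is the first of the two required forces.

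Next, consider the edge $e = (x,v) \in E$. Apply the $1$-EFS property with $E_\ell = \{e\}$: there exists a forcing process $F'$ from $V'$ that colors all of $V$ without using $e$ for any force. In $F'$, since $v \notin V'$, some neighbor $y$ of $v$ must force $v$; and since the only edge between $x$ and $v$ is $e$, which $F'$ does not use, we must have $y \ne x$. Hence $y \to v \in \mathcal{F}(V')$ with $y \ne x$, giving the second required force. Together with the first, Lemma~\ref{thm:1-LFS_previous} implies $V'$ is a $1$-LFS.

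The only subtle point, and the place where one must be careful, is to justify that the forcing process $F'$ avoiding the edge $e=(x,v)$ really certifies $y \to v$ as an element of $\mathcal{F}(V')$ with $y \ne x$; but this is immediate from the simplicity of $G$, since forcing $v$ from $x$ necessarily traverses the unique edge $e$. No new technology beyond Lemma~\ref{thm:1-LFS_previous} and the definition of $1$-EFS is needed, so I do not anticipate a real obstacle; the value of the argument is in its cleanness, and it will later serve as the base case for extending the $\ell$-EFS $\Rightarrow$ $\ell$-LFS direction inductively when assembling the proof of Theorem~\ref{thm:main_result}.
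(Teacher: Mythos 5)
Your proof is correct, but it takes a genuinely different route from the paper's. The paper argues directly from the definition of a $1$-LFS: it fixes an arbitrary leak $u$, takes a forcing process $F$ containing $u\rightarrow v$, invokes the $1$-EFS property to obtain a process $F'$ in which $v$ is forced by some $x\ne u$, and then splices the two via Lemma~\ref{lem:combined_ZFP} --- running $F$ up to the moment $u\rightarrow v$ becomes valid (so that $\mathcal{N}[u]\setminus\{v\}$ is already black and $v$ is the only node $u$ could ever force) and finishing with the tail of $F'$ --- to produce a single forcing process in which $u$ performs no force at all. You instead route everything through the characterization in Lemma~\ref{thm:1-LFS_previous}: for each non-leader $v$ you exhibit two distinct forcers, one coming from an arbitrary forcing process and the other from the process guaranteed by the $1$-EFS property applied to $E_\ell=\{(x,v)\}$. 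Both arguments are sound; your observation that $E_\ell=\emptyset$ already shows a $1$-EFS is a ZFS is also correct and implicitly used by the paper. What each buys: your argument is shorter and avoids the splicing machinery entirely, at the cost of leaning on the cited characterization of $1$-LFS; the paper's argument is constructive (it exhibits the leak-tolerant forcing process explicitly) and is structurally parallel to the converse direction, Lemma~\ref{lem:1EFS}, where the same splicing device is unavoidable anyway.
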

\begin{proof}
Let $F$ be a forcing process with $V'$ as input nodes. Consider $u\in V$ to be be an arbitrary fixed leak. If $u$ is an end node of a forcing chain, then $u$ does not force any node. So, we assume that $u\rightarrow v \in F$ for some $v$. Since $V'$ is $1$-EFS, there exists another forcing process $F'$ such that $v$ is not forced by $u$. Let $x\rightarrow v$, for some $x\ne u$. Now, consider $F$ till the point $u\rightarrow v$ becomes valid but $v$ is not colored black. Let $\tilde{V}$ be the set of black nodes till this point. Note that $v\notin \tilde{V}$. Note that all nodes in $\mathcal{N}[u]\setminus \{v\}$ are colored black and $v$ is the only node that $u$ can force. Consider $F_u = (F\backslash F(\tilde{V}))\cup F'(\tilde{V})$. By Lemma~\ref{lem:combined_ZFP}, $F_u$ is a valid forcing process with input nodes $V'$ coloring all nodes black. Note that $u$ is not forcing $v$ in $F_u$, and hence $u$ is not forcing any node in $F_u$. Thus, $V'$ is a $1$-LFS, which is the desired claim.
\end{proof}
\begin{lem}
\label{lem:intermediate}
Let $V'$ be an $(\ell-1)$-EFS, $E_\ell$ be a set of $\ell$ non-forcing edges, and $\mathcal{D}(V')$ be the derived set after forcing. If $e = (u,v) \in E_\ell$, then $u$ and $v$ can not be white simultaneously. Moreover, if exactly one end node of $e$, say $u$, is black, then $N[u]\setminus\{v\}\subseteq\mathcal{D}(V')$.
\end{lem}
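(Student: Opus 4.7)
My plan is to exploit the $(\ell-1)$-EFS property by removing the edge $e$ from the non-forcing set. I would let $E'_\ell = E_\ell\setminus\{e\}$, which has $\ell-1$ non-forcing edges, and invoke the assumption that $V'$ is an $(\ell-1)$-EFS to obtain a forcing process $F$ starting from $V'$ that colors all of $V$ while avoiding every edge of $E'_\ell$. The critical observation is that, since each node is forced at most once in $F$, the edge $e=(u,v)$ is used at most once by $F$ (either as $u\to v$ or as $v\to u$).

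\textbf{Main step.} If $F$ does not use $e$ at all, then $F$ is also valid in the setting where $E_\ell$ is the non-forcing set, so $\mathcal{D}(V')=V$ and both claims are trivial. Otherwise, I would assume without loss of generality that $F$ contains the force $u\to v$ (the case $v\to u$ is symmetric). Let $\tilde V$ denote the set of black nodes in $F$ at the instant just before the force $u\to v$ is applied. By validity of this force, $u\in\tilde V$ and $\mathcal{N}_u\setminus\{v\}\subseteq\tilde V$. Every force of $F$ used to build up $\tilde V$ comes chronologically before $u\to v$, so it uses neither $e$ nor any edge of $E'_\ell$, and hence avoids all of $E_\ell$. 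Replaying this prefix of $F$ verbatim in the setting with $E_\ell$ non-forcing is valid step by step, because a force's validity depends only on the current coloring and on the forcing status of its edge. This yields $\tilde V\subseteq\mathcal{D}(V')$, and in particular $u\in\mathcal{D}(V')$ and $\mathcal{N}_u\setminus\{v\}\subseteq\mathcal{D}(V')$.

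\textbf{Wrapping up.} The first claim follows immediately: at least one of $u,v$ lies in $\tilde V\subseteq\mathcal{D}(V')$, so $u$ and $v$ cannot both be white. For the second claim, I would take the hypothesis that exactly $u$ is black (and $v$ is white) and rule out the case that $F$ uses $e$ as $v\to u$, because the symmetric version of the argument above would then place $v$ in $\mathcal{D}(V')$, contradicting that $v$ is white. Hence $F$ must use $e$ as $u\to v$, and the argument above produces $N[u]\setminus\{v\}=\{u\}\cup(\mathcal{N}_u\setminus\{v\})\subseteq\mathcal{D}(V')$.

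\textbf{Expected obstacle.} The delicate part is the replay step: I must confirm that the prefix of $F$ preceding the use of $e$ remains a legitimate sequence of forces once the non-forcing set is enlarged from $E'_\ell$ to $E_\ell$. This should go through by a straightforward induction on the position in the chronological list, showing that after replaying the first $j$ forces the coloring coincides with the coloring of $F$ after its first $j$ forces, combined with the fact that by construction no force in the prefix uses an edge of $E_\ell$. Everything else in the argument is essentially bookkeeping on top of this invariant.
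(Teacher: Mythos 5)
Your proof is correct, and it reaches the conclusion by a noticeably different mechanism than the paper's. Both arguments pivot on the same key move --- apply the $(\ell-1)$-EFS hypothesis to $E_\ell\setminus\{e\}$ --- but the paper argues by contradiction from the \emph{final stalled configuration}: if both endpoints of $e$ were white, or if $u$ had a second white neighbor besides $v$, then $e$ could never have enabled a force anyway, so the (deficient) derived set would be unchanged if $e$ were dropped from the non-forcing set, contradicting the $(\ell-1)$-EFS property. You instead argue \emph{forward}: you extract the full forcing process $F$ that the $(\ell-1)$-EFS property guarantees for $E_\ell\setminus\{e\}$, observe that $F$ can use $e$ at most once, and replay the prefix of $F$ up to that single use under the larger non-forcing set $E_\ell$; the validity condition for the force $u\rightarrow v$ then hands you exactly the black set $N[u]\setminus\{v\}$ that the lemma demands, and the first claim falls out as a byproduct. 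Your route is slightly longer but makes explicit something the paper leaves implicit, namely why the behavior under $E_\ell$ and under $E_\ell\setminus\{e\}$ can be compared at all (your step-by-step replay, combined with the uniqueness of the derived set to conclude $\tilde{V}\subseteq\mathcal{D}(V')$); the paper's version is shorter but asserts that ``the zero forcing behavior does not change'' without spelling out that this is a confluence statement about derived sets. Both proofs are sound, and your case analysis (no use of $e$, use as $u\rightarrow v$, use as $v\rightarrow u$) correctly covers everything needed for both claims.
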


\begin{proof}
If both end nodes of $e\in E_\ell$ are white, then none of the end nodes can force the other end node. Thus, zero forcing behavior of $V'$ does not change even if $e$ is not a non-forcing edge. So, if we consider $E_\ell\setminus e$ as the set of non-forcing edges, $\mathcal{D}(V')\ne V$, implying that $V'$ is not an $(\ell-1)$-EFS, which is a contradiction. Similarly, let $e=(u,v)\in E_\ell$ be such that $u\in\mathcal{D}(V')$ and $v\notin\mathcal{D}(V')$. Assume that $x\in N(u)$ is white and $x\ne v$. Since $u$ has two white neighbors, $u$ can not force any node (including $v$) even if $e$ is not a non-forcing edge. Again, considering $E_\ell\setminus e$ as the set of non-forcing edges will give $\mathcal{D}(V')\ne V$. It means $V'$ is not an $(\ell-1)$-EFS, which is a contradiction.
\end{proof}


\begin{lem}
\label{lem:inter_2}
Let $V'$ be an $(\ell-1)$-LFS and $L$ be a set of $\ell$ leaks. Then $L\subseteq \mathcal{D}(V')$, where $\mathcal{D}(V')$ is a derived set with leaks. Also, each $v\in L$ has at most one white neighbor. 
\end{lem}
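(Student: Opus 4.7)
The plan is to prove both statements for each $v \in L$ by comparing the zero forcing process with leaks $L$ to the one with the smaller leak set $L' := L \setminus \{v\}$, which has $\ell - 1$ leaks. The guiding observation is that any force $u \rightarrow w$ whose forcer $u$ satisfies $u \notin L'$ and $u \ne v$ automatically has $u \notin L$, so such a force is legal under the larger leak set $L$ as well.

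First, fix an arbitrary $v \in L$. Since $|L'| = \ell - 1$ and $V'$ is an $(\ell-1)$-LFS, there is a forcing process $F'$ with leaks $L'$ that colors every node of $V$ black. If $v$ never acts as a forcer in $F'$, then every force of $F'$ has its forcer in $(V \setminus L') \setminus \{v\} = V \setminus L$, so $F'$ is valid under leak set $L$ as well, giving $\mathcal{D}(V') = V$; both claims hold trivially at $v$. Otherwise, let $t$ be the earliest step in $F'$ at which $v$ forces, and let $S_t$ denote the set of black nodes just before step $t$.

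Every force of $F'$ that precedes step $t$ has forcer $\ne v$ and $\notin L'$, hence $\notin L$. Thus the prefix of $F'$ through step $t-1$, applied from $V'$, is a valid zero forcing sequence under leaks $L$, producing $S_t$ as its set of black nodes, so $S_t \subseteq \mathcal{D}(V')$. Since $v$ is the forcer at step $t$, it must already be black at that moment, hence $v \in S_t \subseteq \mathcal{D}(V')$, proving the first claim. Moreover, the forcing rule at step $t$ requires $v$ to have exactly one white neighbor in $S_t$; because $\mathcal{D}(V') \supseteq S_t$, any neighbor of $v$ that is black in $S_t$ is also black in $\mathcal{D}(V')$, so $v$ has at most one white neighbor in $\mathcal{D}(V')$, which is the second claim.

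The argument is essentially a truncation/simulation of a forcing process, in the same spirit as Lemmas~\ref{lem:1EFS} and \ref{lem:EFStoLFS}. The only subtlety is verifying that cutting $F'$ just before $v$'s first force leaves a prefix that remains valid under the strictly larger leak set $L$, which the setup makes routine. I therefore do not foresee a substantial obstacle.
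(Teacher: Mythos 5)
Your proof is correct and rests on the same key idea as the paper's: compare the process with leak set $L$ against the one with $L\setminus\{v\}$, which has $\ell-1$ leaks, and invoke the $(\ell-1)$-LFS hypothesis. The paper phrases this as a contradiction (if $v$ stayed white, or stayed black with two white neighbors, it would never be in a position to force, so dropping it from the leak set changes nothing, contradicting the $(\ell-1)$-LFS property), whereas you argue directly by truncating the successful $(\ell-1)$-leak process at $v$'s first force; your version is, if anything, more explicit about why the two processes coincide up to that point.
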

\begin{proof}
Assume $v\in L$ is white, i.e., $v\notin\mathcal{D}(V')$. A white leak node does not change the zero forcing behavior of the black nodes. So, we consider $L' = L\setminus\{v\}$. Since $V'$ is an $(\ell-1)$-LFS, so all nodes should be black for any $(\ell-1)$ leaks. However, all nodes are not black, which is a contradiction. For the second part, assume that there is leak node $v\in L$ that is colored black and has two white neighbors. A black node with two white neighbors can not force any node, so we consider $L' = L\setminus\{v\}$ as a set of $(\ell-1)$ leaks. Since $V'$ is $(\ell-1)$-LFS, it should color all nodes black with leaks in $L'$, which is not the case. Hence a contradiction arises, proving the desired claim. 
\end{proof}

Next, we show the equivalence between $\ell$-LFS and $\ell$-EFS.

\begin{theorem} 
\label{thm:LSFEFS}
For a given $\ell$ and $G = (V,E)$, $V'\subseteq V$ is an $\ell$-LFS if and only if $V'$ is an $\ell$-EFS.
\end{theorem}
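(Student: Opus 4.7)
The plan is to prove the equivalence by induction on $\ell$, with the base case $\ell = 1$ handled directly by Lemmas~\ref{lem:1EFS} and~\ref{lem:EFStoLFS}. Lemma~\ref{lem:1EFS} shows that a $1$-LFS admits an avoiding forcing process for any single edge, which is exactly the $1$-EFS condition, while Lemma~\ref{lem:EFStoLFS} is the converse. For the inductive step, I would assume the equivalence at level $\ell-1$ and derive both directions at level $\ell$ by reducing to the inductive hypothesis and then modifying the resulting forcing process via the splicing construction of Lemma~\ref{lem:combined_ZFP}.

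For the forward direction ($\ell$-LFS $\Rightarrow$ $\ell$-EFS), I would first observe that every $\ell$-LFS is $(\ell-1)$-LFS by Theorem~\ref{thm:ell-LFS_previous}, hence $(\ell-1)$-EFS by the inductive hypothesis. Given $E_\ell$ with $|E_\ell| \le \ell$, I would pick an edge $e = (u,v) \in E_\ell$ and apply $(\ell-1)$-EFS to $E_\ell \setminus \{e\}$ to obtain a forcing process $F$ avoiding those $\ell-1$ edges. If $F$ also avoids $e$, we are done; otherwise, say $u \to v \in F$. The goal is then to extract an alternate force $x \to v$ with $x \neq u$ from the two-forcer characterization of Theorem~\ref{thm:ell-LFS_previous}, applied with a suitably chosen $(\ell-1)$-leak set $L$ that encodes $E_\ell \setminus \{e\}$ (for instance, by placing one endpoint of each remaining non-forcing edge into $L$). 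Splicing $F$ with a forcing process containing $x \to v$ at the moment just before $u$ would force $v$, following the template of Lemma~\ref{lem:1EFS}, produces the desired $E_\ell$-avoiding process.

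The reverse direction ($\ell$-EFS $\Rightarrow$ $\ell$-LFS) is dual. Since an $\ell$-EFS is trivially an $(\ell-1)$-EFS (restrict attention to any $(\ell-1)$-edge subset), the inductive hypothesis yields $(\ell-1)$-LFS. Fixing a leak set $L$ of size $\ell$, I pick $u \in L$ and apply $(\ell-1)$-LFS to $L \setminus \{u\}$ to obtain a process $F$ in which no node of $L \setminus \{u\}$ forces. If $u$ also does not force in $F$, done; otherwise $u \to v \in F$ for a unique $v$. Invoking the $\ell$-EFS property on the edge set $\{(u,v)\}$ together with edges encoding $L \setminus \{u\}$, I obtain a process $F'$ avoiding $(u,v)$, which I splice into $F$ exactly as in Lemma~\ref{lem:EFStoLFS}. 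Since any node forces at most one neighbor per process, preventing the single force $u \to v$ is enough to render $u$ a non-forcer in the spliced process.

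The main obstacle is the bookkeeping at the splice: the replacement sub-process $F'(\tilde V)$ must not reintroduce any previously avoided constraint, whether an edge of $E_{\ell-1}$ in the forward direction or a force by another leak in the reverse direction. Lemmas~\ref{lem:intermediate} and~\ref{lem:inter_2}, which constrain the white-neighbor structure around endpoints of non-forcing edges and around leak nodes at intermediate coloring stages, are the key technical tools for controlling this interaction. The most delicate step is translating between the leak-set and non-forcing-edge encodings, choosing the correspondence so that Theorem~\ref{thm:ell-LFS_previous} applies with the right parameters and so that the spliced process provably respects the original set of restrictions.
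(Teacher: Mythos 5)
Your base case and overall induction skeleton match the paper, but your inductive step takes a genuinely different route --- a one-edge-at-a-time (resp.\ one-leak-at-a-time) repair by splicing --- and it has a real gap exactly where you flag ``the main obstacle.'' The paper does not splice in the inductive step at all. For $\ell$-LFS $\Rightarrow$ $\ell$-EFS it runs an $(\ell-1)$-EFS process against all of $E_\ell$ to its end, invokes Lemma~\ref{lem:intermediate} to conclude that every $e=(u,v)\in E_\ell$ then has a black endpoint $u$ with $N[u]\setminus\{v\}$ black, declares one such black endpoint of each edge a leak, deletes these now-inert leak nodes, and applies the $\ell$-LFS property once to color the rest. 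The reverse direction is the mirror image via Lemma~\ref{lem:inter_2}: each leak ends up black with at most one white neighbor, so the leak-to-white-neighbor edges can be declared non-forcing and deleted, and $\ell$-EFS finishes the coloring. Crucially, those two structural lemmas are statements about the \emph{final} derived set under the full constraint set; that is what licenses the translation between the two failure models.

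Your splice, by contrast, happens at the intermediate stage $\tilde V$ where $u\to v$ first becomes valid in $F$. At that stage Lemma~\ref{lem:intermediate} gives you no control over the other edges of $E_\ell\setminus\{e\}$ (both endpoints of such an edge may still be white), so encoding them as a leak set $L$ does not prevent the replacement tail $F'(\tilde V)$ from forcing across them: a leak $a$ cannot force, but its partner $b$ can still force $a$, and if neither endpoint is black at $\tilde V$ the edge may be used in either direction later. The same problem appears in your reverse direction: ensuring that $u$ never forces in the spliced process does not ensure that the nodes of $L\setminus\{u\}$ never force in the new tail, since the non-forcing-edge encoding of $L\setminus\{u\}$ requires knowing each leak's unique white neighbor, which is only pinned down at the end of the leaky process, not at the splice point. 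You correctly identify this interaction as the crux but do not resolve it, so the inductive step does not go through as written; Lemma~\ref{lem:combined_ZFP} alone does not preserve all $\ell$ constraints simultaneously. The remedy is essentially to replace the local splice with the paper's global run-to-completion-then-translate argument.
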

\begin{proof} 
($\ell$-LFS $\rightarrow \ell$-EFS) We will prove using induction on $\ell$. From Lemma~\ref{lem:1EFS}, if $V'$ is $1$-LFS, then it is $1$-EFS. So, our induction hypothesis is, if $V'$ is $(\ell-1)$-LFS, then $V'$ is $(\ell-1)$-EFS. Now, assume that $V'$ is $\ell$-LFS. Thus, $V'$ must be $(\ell-1)$-LFS implying that it is also $(\ell-1)$-EFS (by our induction hypothesis). Let $E_\ell$ be a set of $\ell$ non-forcing edges. By Lemma~\ref{lem:intermediate}, there is a forcing process $F$ such that for each $e = (u,v)\in E_\ell$, either both end nodes $u$ and $v$ are black, or one end node, say $u$, is black with $N[u]\setminus\{v\}$ also colored black. Let $\tilde{V}$ be the set of black nodes with the forcing process $F$. Next, for each $e\in E_\ell$, we consider one of its black end node as a leak, and denote the set of leaks by $L$. There will be at most $\ell$ leaks. We observe that a black colored leak node can be ignored and deleted without altering the zero forcing behavior of the other black nodes. So, we consider $G' = G\setminus \{L\}$. Since $G$ is $\ell$-LFS with $V'$, $\tilde{V}\setminus L$ is a ZFS of $G'$. As a result, all nodes in $V$ are colored black while considering $E_\ell$ as non-forcing edges, implying $V'$ is an $\ell$-EFS.


($\ell$-EFS $\rightarrow \ell$-LFS) Again, we will use induction on $\ell$. For $\ell = 1$, if $V'$ is $1$-EFS, then it is $1$-LFS (by Lemma~\ref{lem:EFStoLFS}). Our induction hypothesis is that $V'$ is $(\ell-1)$-EFS implies it to be $(\ell-1)$-LFS. Now assume $V'$ to be $\ell$-EFS. It means $V'$ is $(\ell-1)$-EFS and hence, $(\ell-1)$-LFS (by the induction hypothesis). Consider $L$ to be a set of $\ell$ leaks. By Lemma~\ref{lem:inter_2}, these leaks are colored black and each of them has at most one white neighbor. Next, we consider the edge between the leak and its white neighbor as a non-forcing edge. There will be at most $\ell$ such non-forcing edges, which we denote by $E_\ell$. Since one end node (leak) of each non-forcing edge is colored black and a leak has at most one white neighbor (which is the other end node of the non-forcing edge), we can safely delete the non-forcing edge without affecting the zero forcing behavior of the black node. Thus, we get $G' = G\setminus E_\ell$. Since $V'$ is $\ell$-EFS, it means $V'$ is a ZFS of $G'$. Thus, all nodes will be colored black despite $\ell$ leaks. Hence, $V'$ is $\ell$-LFS, which proves the desired claim.
\end{proof}

\begin{lem}
\label{lem:RMSEFS}
If $V'$ is $1$-FSR then it is $1$-EFS.
\end{lem}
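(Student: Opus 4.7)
The plan is to construct a forcing process in $G$ from $V'$ that avoids the target edge $e = (u,v)$ by splicing two processes together at the last safe moment. Since $V'$ is $1$-FSR, it is in particular a ZFS of $G$, so I would fix any forcing process $F$ of $G$ with input $V'$. If $F$ already avoids both $u\to v$ and $v\to u$, it is already a witness that $V'$ is a $1$-EFS for $e$. Otherwise, assume without loss of generality $u\to v \in F$ (the $v \to u$ case is symmetric). Let $\tilde V$ denote the set of black vertices in $F$ immediately \emph{before} the force $u \to v$ is performed; by the definition of this force, $V' \subseteq \tilde V$, $u \in \tilde V$, $v \notin \tilde V$, and $N_G(u)\setminus\{v\} \subseteq \tilde V$.

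Next, because $V'$ is $1$-FSR, $V'$ is a ZFS of the reduced graph $G' = G \setminus \{e\}$, and hence the superset $\tilde V$ is also a ZFS of $G'$. I would then pick a forcing process $F''$ in $G'$ starting from the input set $\tilde V$. The main step is to verify that $F''$ is simultaneously a valid forcing process in the full graph $G$ starting from $\tilde V$. For any force $p \to q$ in $F''$, three cases arise: (i) if $p \notin \{u,v\}$ then $N_G(p) = N_{G'}(p)$ and the force transfers verbatim to $G$; (ii) if $p = u$, then every $G'$-neighbor of $u$ already lies in $\tilde V$ and is thus black throughout $F''$, which leaves $u$ with no white $G'$-neighbor at any point, so $u$ cannot appear as a forcer in $F''$ at all; (iii) if $p = v$, the only $G$-neighbor of $v$ not accounted for in $G'$ is $u$, which lies in $\tilde V$ and is therefore always black, so the $G'$-validity of $v \to q$ upgrades directly to $G$-validity.

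Finally, concatenating the truncation of $F$, which drives $V'$ to $\tilde V$ in $G$ without touching $e$, with $F''$, whose forces all live in $G'$ and hence never use $e$, yields a forcing process of $G$ with input $V'$ that colors every vertex without ever using $e$. This is precisely the certificate required to conclude that $V'$ is a $1$-EFS. The one nontrivial point is the middle verification, and the key enabler is choosing $\tilde V$ at the exact ``about-to-use-$e$'' moment: saturating $N_G(u)\setminus\{v\}$ inside the input set simultaneously \emph{neutralizes} $u$ as a potential forcer in $G'$ and \emph{guarantees} that $u$ remains available as a black neighbor whenever $v$ needs to force in $G$, so the two roles of the endpoints of $e$ align exactly as required.
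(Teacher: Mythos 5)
Your proof is correct, but it takes a genuinely different route from the paper's. The paper argues by contraposition: it supposes some edge $(u,v)$ must be used (say as $u\rightarrow v$) in every forcing process, observes that at the critical moment $u$ has $v$ as its only white neighbor while every other black neighbor of $v$ still has at least two white neighbors, and concludes that deleting $(u,v)$ leaves $v$ unforceable, so $V'$ is not a $1$-FSR. That argument is shorter but leans on an implicit universal claim --- that unforceability of $v$ at one moment of one process propagates to \emph{every} forcing process of $G\setminus\{(u,v)\}$ --- which the paper does not spell out. Your direct construction sidesteps that quantification entirely: you truncate a $G$-process at the instant before $e$ would be used, note that the black set $\tilde V$ saturates $N_G(u)\setminus\{v\}$, invoke the $1$-FSR hypothesis to get a forcing process of $G\setminus\{e\}$ from the enlarged input $\tilde V$ (using the standard monotonicity of zero forcing under supersets of the input), and then verify case by case that this process remains valid in $G$ --- the same saturation of $N_G(u)\setminus\{v\}$ both disqualifies $u$ as a forcer in $G\setminus\{e\}$ and keeps $u$ permanently black as a neighbor of $v$ in $G$. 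This is closer in spirit to the splicing arguments the paper itself uses for Lemmas~\ref{lem:1EFS} and~\ref{lem:EFStoLFS} (via Lemma~\ref{lem:combined_ZFP}) than to its own proof of this lemma; what it buys is an explicit certificate (the $e$-avoiding process) and a fully checkable argument, at the cost of a few more lines. One small point worth making explicit in your write-up: the truncated prefix of $F$ cannot contain $v\rightarrow u$ either, since $v$ is still white throughout that prefix.
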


\begin{proof}
 By contraposition, let $V'$ be a ZFS that is not a $1$-EFS. It means there is an edge, say $(u,v)$, such that every forcing process must use the edge, i.e., $(u,v)$ must be a forcing edge in any forcing process. Without the loss of generality, we assume $u$ forces $v$.
It implies that all the black neighbors of $v$ have at least two white neighbors and $u$ has only one white neighbor $v$. Thus, by removing edge $(u,v)$, $v$ can not be forced. Hence, $V'$ is not a $1$-FSR, which is the desired claim. 
\end{proof}

In the following, we show the equivalence of $\ell$-EFS and $\ell$-FSR.

\begin{theorem}
\label{thm:EFSRMS}
$V'$ is $\ell$-EFS if and only if $V'$ is $\ell$-FSR.
\end{theorem}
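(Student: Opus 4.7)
The plan is to prove the two implications separately. The forward direction, $\ell$-EFS $\Rightarrow \ell$-FSR, is straightforward: given any set $E_\ell \subseteq E$ of at most $\ell$ edges to remove, treat $E_\ell$ as a set of non-forcing edges and apply the $\ell$-EFS property to obtain a forcing process $F$ in $G$ that colors all of $V$ without forcing along any edge in $E_\ell$. I then argue that $F$ remains valid in $G' := (V, E \setminus E_\ell)$: each force $u \rightarrow v$ in $F$ uses an edge outside $E_\ell$ and so is still present in $G'$, and deleting the edges of $E_\ell$ can only shrink $u$'s neighborhood, preserving the property that $v$ is $u$'s unique white neighbor at the moment of forcing. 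Hence $V'$ is a ZFS of every such $G'$, giving $\ell$-FSR.

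For the converse $\ell$-FSR $\Rightarrow \ell$-EFS, I will instead show $\ell$-FSR $\Rightarrow \ell$-LFS and then invoke Theorem~\ref{thm:LSFEFS}. The proof is by induction on $\ell$. The base case $\ell = 1$ is handled by Lemma~\ref{lem:RMSEFS} followed by Theorem~\ref{thm:LSFEFS}. For the inductive step, assume $(\ell-1)$-FSR $\Rightarrow (\ell-1)$-LFS and let $V'$ be an $\ell$-FSR. Since $V'$ is then also $(\ell-1)$-FSR, it is $(\ell-1)$-LFS by the inductive hypothesis. Fix an arbitrary leak set $L \subseteq V$ with $|L| = \ell$, set $S := \mathcal{D}_L(V')$, and assume for contradiction that $S \ne V$.

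By Lemma~\ref{lem:inter_2}, applied to $V'$ as an $(\ell-1)$-LFS with $\ell$ leaks, $L \subseteq S$ and each $v \in L$ has at most one white neighbor at the terminated state $S$. For each $v \in L$ having a white neighbor, let $w_v$ denote it, and put $E_L := \{(v,w_v) : v \in L\}$, so $|E_L| \le \ell$. By the $\ell$-FSR assumption, $V'$ is a ZFS of $G' := (V, E \setminus E_L)$. The contradiction will come from showing that the derived set of $V'$ in $G'$ equals $S$, not $V$. This rests on two observations: (a) every forcing move that carries $V'$ to $S$ in the leaky process is performed by a non-leak, whose incidences are identical in $G$ and $G'$ (since $E_L$ is incident only to leaks), so each such move remains valid in $G'$; and (b) $S$ is a dead configuration in $G'$, because every non-leak $u \in S$ has the same neighborhood in $G'$ as in $G$ and must have either zero or at least two white neighbors in $G$ at the state $S$ (otherwise it would have forced in the leaky process), while every leak $u \in L$ has at most one white neighbor in $G$ at $S$, and if that neighbor $w_u$ exists the edge $(u,w_u)$ has been explicitly removed in $G'$, leaving $u$ with no white neighbor. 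By uniqueness of the derived set, the derived set of $V'$ in $G'$ is therefore $S \ne V$, contradicting the $\ell$-FSR property; hence $S = V$ and $V'$ is $\ell$-LFS.

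The main obstacle I anticipate is the case analysis in step (b) showing that $S$ is a dead configuration in $G'$. This relies on two structural facts working together: Lemma~\ref{lem:inter_2} bounds the number of white neighbors of each leak at the terminated state, which both guarantees $|E_L| \le \ell$ and ensures $E_L$ captures every potentially outgoing move from a leak in $G'$; and the explicit construction of $E_L$, with edges incident only to leaks, preserves non-leak neighborhoods when passing from $G$ to $G'$. If either ingredient failed the contradiction would not go through, so carefully articulating these structural properties is the crux of the proof.
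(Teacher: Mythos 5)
Your proof is correct, and the forward direction ($\ell$-EFS $\Rightarrow$ $\ell$-FSR) is essentially the paper's argument, with the useful extra remark that deleting unused edges only shrinks neighborhoods and so preserves the validity of each force. For the converse, however, you take a genuinely different route: the paper proves $\ell$-FSR $\Rightarrow$ $\ell$-EFS directly, inducting on $\ell$ and using Lemma~\ref{lem:intermediate} (the structure of the terminal configuration under $\ell$ non-forcing edges) to argue that the $\ell$ non-forcing edges can be deleted without changing the forcing behavior, after which $\ell$-FSR finishes the job. You instead prove $\ell$-FSR $\Rightarrow$ $\ell$-LFS, using Lemma~\ref{lem:inter_2} (the analogous structure of the terminal configuration under $\ell$ leaks) to build the removable edge set $E_L$ from each leak and its unique white neighbor, and then invoke Theorem~\ref{thm:LSFEFS} to convert back to $\ell$-EFS. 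The two arguments are structurally parallel --- each converts the ``stuck'' state of one threat model into an edge-removal instance and derives a contradiction with $\ell$-FSR --- but yours makes Theorem~\ref{thm:EFSRMS} logically dependent on Theorem~\ref{thm:LSFEFS} (harmless here, since that result is established independently), whereas the paper keeps the two equivalences self-contained. What your version buys is a noticeably more careful justification of the key step: your observations (a) and (b), verifying that $E_L$ is incident only to leaks, that non-leak neighborhoods are untouched in $G'$, and that $S$ is a dead configuration whose uniqueness as the derived set yields the contradiction, are spelled out more rigorously than the paper's rather terse ``the removal of edges \ldots will not change the zero forcing behavior of the black nodes.'' As a bonus, your argument establishes the implication $\ell$-FSR $\Rightarrow$ $\ell$-LFS directly, which the paper only obtains by composing its two theorems.
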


\begin{proof}
($\ell$-EFS $\rightarrow$ $\ell$-FSR) We first note that if $F$ is a ZFP with $V'$ as a ZFS, then there are at most $n-1$ edges used in $F$. If we remove edges not used in $F$ to get $G'$, then $V'$ will still be a ZFS of $G'$. Thus, if $V'$ is an $\ell$-EFS, it means for any edge set $E_\ell\subseteq E$, where $|E_\ell|\le \ell$, there exist a forcing process with $V'$, say $F_{\ell}$, coloring all nodes black without using edges in $E_\ell$. Since edges in $E_\ell$ are not used in $F_{\ell}$, we can remove them from $G$ while maintaining $V'$ to be a ZFS of $G' = (V,E\setminus E_{\ell})$, i.e., $F_{\ell}$ is a forcing process of $G'$ with $V'$, implying $V'$ is an $\ell$-FSR of $G$.

($\ell$-FSR $\rightarrow$ $\ell$-EFS) We will prove using induction on $\ell$. For $\ell=1$, if $V'$ is $1$-FSR, then it is $1$-EFS by Lemma~\ref{lem:RMSEFS}. Thus, we make the induction hypothesis, if $V'$ is $(\ell-1)$-FSR, then it is $(\ell-1)$-EFS. Assuming $V'$ to be $\ell$-FSR, we need to show that $V'$ is $\ell$-EFS. Let $E_\ell$ be a set of $\ell$ non-forcing edges. Since $V'$ is $(\ell-1)$-EFS (by the induction hypothesis), we apply forces such that each edge $e = (u,v) \in E_\ell$ satisfies one of the two conditions (by Lemma~\ref{lem:intermediate}):
(i) Both end nodes of $e$ are colored black. 
(ii) If one node, say $u$, is black, then all nodes in $N[u]\setminus\{v\}$ are also colored black. Note that the removal of edges in both cases (i) and (ii) will not change the zero forcing behavior of the black nodes. Thus, we remove these $\ell$ edges. Since $V'$ is $\ell$-FSR, it means that all nodes will be colored black at the end of the forcing process. Thus, all nodes are colored black in spite of $\ell$ non-forcing edges, i.e., $V'$ is $\ell$-EFS, which is the desired result.
\end{proof}

A direct corollary of Theorems~\ref{thm:LSFEFS} and \ref{thm:EFSRMS} is Theorem~\ref{thm:main_result} entailing that resilience to one type of misbehaving nodes/edges implies resilience to other kinds of misbehaving nodes/edges.




\section{Computation and Numerical Illustration}
\label{sec:computation}
Computing a minimum $\ell$-LFS, and therefore, $\ell$-EFS and $\ell$-FSR, are NP hard problems (since minimum ZFS is NP-hard). Here, first, we illustrate the characterization of $1$-LFS provided in Theorem~\ref{thm:1-LFS_previous}. Then, we present a greedy algorithm to compute a small-sized $1$-LFS and numerically evaluate it. A greedy algorithm for any $\ell$ can be designed using a similar approach and utilizing the characterization in Theorem~\ref{thm:ell-LFS_previous}. 

For a given input set $V'$, we define $\mathcal{Q}(V')$ to be the set of non-input nodes, each of which can be forced by at least two distinct nodes. More precisely,
\begin{equation}
\begin{split}
\label{eq:Q}
\mathcal{Q}(V') = \{v\in V\setminus V' : \; \; & \exists \; x\rightarrow v \in \mathcal{F}(V') \text{ and}\\ &y\rightarrow v \in \mathcal{F}(V'), \text{ and } x\ne y\}.
\end{split}
\end{equation}

By Theorem~\ref{thm:1-LFS_previous}, $V'$ is $1$-LFS if and only if $\mathcal{Q}(V') = V\setminus V'$. Consider the graph in Figure~\ref{fig:R1_example}, where $V' = \{v_1,v_2,v_4,v_6\}$ is a $1$-LFS. To verify this, we need to show that for each $v\in V\setminus V'$, there exist zero forcing processes such that $v$ is forced by at least two distinct nodes. Table~\ref{table:only} outlines multiple zero forcing processes. We observe that for each $v\in V\setminus V'$, there exist zero forcing processes where $v$ is forced by two distinct forcers. For instance, $v_3$ is forced by $v_2$ and $v_8$ in ZFP 1 and ZFP 2, respectively. Similarly, $v_5$ is forced by $v_9$ and $v_3$ in ZFP 1 and ZFP 3, respectively. 

\begin{figure}[htb]
    \centering
    \includegraphics[scale=0.65]{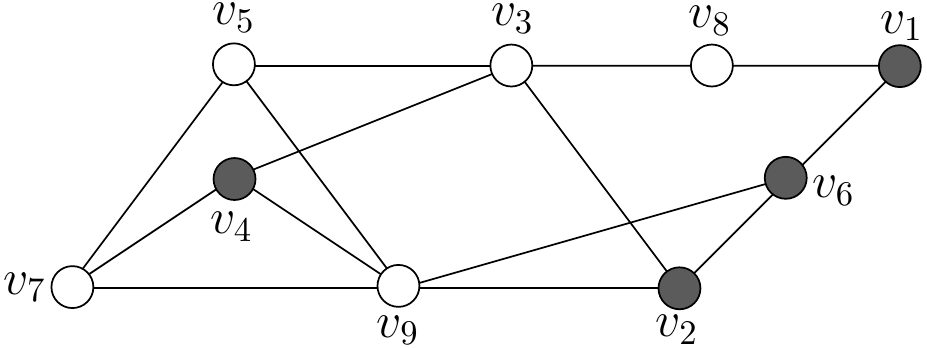}
    \caption{$V' = \{v_1,v_2,v_4,v_6\}$ is a $1$-LFS of $G$.}
    \label{fig:R1_example}
\end{figure}

\begin{table*}[h]
\begin{center}
\begin{tabular}{ |c|c|c|c|c|c| } 
 \hline
 ZFP 1 & ZFP 2 & ZFP 3 & ZFP 4 & ZFP 5 & ZFP 6\\ 
 \hline
  $\begin{array}{c}v_1\rightarrow v_8 \\ v_2\rightarrow v_3\\
 v_4\rightarrow v_7\\
 v_6\rightarrow v_9 \rightarrow v_5\end{array}$
 & 
  $\begin{array}{c}v_1\rightarrow v_8\rightarrow v_3 \\ 
  v_2\\
 v_4\rightarrow v_7\\
 v_6\rightarrow v_9 \rightarrow v_5\end{array}$
 & 
 $\begin{array}{c}v_1\rightarrow v_8 \\ 
  v_2\rightarrow v_3\rightarrow v_5\\
 v_4\rightarrow v_7\\
 v_6\rightarrow v_9 \end{array}$
&
 $\begin{array}{c}v_1\rightarrow v_8 \\ 
  v_2\rightarrow v_3\rightarrow v_5\\
 v_4\\
 v_6\rightarrow v_9 \rightarrow v_7\end{array}$
 & 
 $\begin{array}{c}v_1\\
 v_2\rightarrow v_3 \rightarrow v_8\\ 
  v_4\rightarrow v_7 \\
 v_6\rightarrow v_9 \rightarrow v_5\end{array}$
 &
 $\begin{array}{c}v_1\rightarrow v_8\rightarrow v_3 \rightarrow v_5\\ 
  v_2\rightarrow v_9 \\
 v_4\rightarrow v_7\\
 v_6\end{array}$
 \\
 \hline
\end{tabular}
\end{center}
\caption{Multiple zero forcing processes (ZFP) with a given $V'=\{v_1,v_2,v_4,v_6\}$ for the graph in Figure~\ref{fig:R1_example}. For each $v\in V\setminus V'$, there are at least two ZFP such that $v$ is forced by distinct nodes in such ZFP.}
\label{table:only}
\end{table*}

Algorithm~1 presents a \emph{greedy} heuristic to compute a $1$-LFS. The main idea is to iteratively include nodes in the leader set $V'$ to maximize the size of $\mathcal{Q}(V')$ (as in \eqref{eq:Q}) until $\mathcal{Q}(V') = V\setminus V'$. Since every $1$-LFS is a ZFS, we initialize $V'$ in Algorithm~1 with a ZFS. As a result of this greedy selection, $V'$ might contain some redundant nodes. Lines~$9-14$ in Algorithm~1 remove such redundant nodes to reduce the size of $1$-LFS.

{\small
\begin{center}
\begin{tabular}{l l}
\rule[0.08cm]{8.5cm}{0.03cm}\\
\textbf{Algorithm~1: Greedy Heuristics for 1-LFS}\\
\rule[0.08cm]{8.5cm}{0.02cm}\\
\mbox{\small $\;1:\;$} \textbf{given:} $G = (V,E)$, $|V| = n$\\
\mbox{\small $\;2:\;$} \textbf{initialization:}
$V'=\emptyset$\\
\mbox{\small $\;3:\;$} Compute ZFS $Z_{\textbf{0}}$, and assign $V' = Z_{\textbf{0}}$\\
\mbox{\small $\;4:\;$} Compute $\mathcal{Q}(V')$\\
\mbox{\small $\;5:\;$} \textbf{while} {$|\mathcal{Q}(V')| < n-|Z_{\textbf{1}}|$} \\
\mbox{\small $\;6:$}\hspace{0.25in}$v^\ast = \argmax\limits_{v\in V\setminus (V'\cup \mathcal{Q}(V'))} {\mathcal{Q}(V'\cup\{v\})}$ \\
\mbox{\small $\;7:$}\hspace{0.25in}$V' = V'\cup\{v^\ast\}$ \\
\mbox{\small $\;8:\;$} \textbf{end while} \\
\mbox{\small $\;\;\;\;\;\;\;$}\texttt{-------- removing redundancies --------} \\
\mbox{\small $\;9:$} \; $Z = V'$\\
\mbox{\small $\;10:\;$} \textbf{for} {all $v\in Z$} \\
\mbox{\small $\;11:$}\hspace{0.25in}\textbf{if} {$|\mathcal{Q}(V'\setminus \{v\})| = n - |V'|$ - $1$} \\
\mbox{\small $\;12:$}\hspace{0.5in}$V' = V'\setminus \{v\}$ \\
\mbox{\small $\;13:$}\hspace{0.25in}\textbf{end if} \\
\mbox{\small $\;14:\;$} \textbf{end for} \\
\mbox{\small $\;15:\;$} \textbf{return} {$V'$} \\
\rule[0.08cm]{8.5cm}{0.02cm}\\
\end{tabular}
\end{center}
}



Figure~\ref{fig:numerical} compares the greedy heuristic with the optimal solution for Erd\"os-R\'enyi (ER) and Barab\'asi-Albert (BA) random graphs with $n = 20$ nodes. In ER graphs, any two nodes are adjacent with a probability $p$. BA graphs are obtained by attaching a new node (one at a time) to an existing graph through $m$ edges using a preferential attachment model. The optimal solution is computed using an exhaustive search. Also, each point on the plots averages $15$ randomly generated instances. Figures~\ref{fig:numerical}(a) and (c) plot the size of $1$-LFS as a function of $m$ in BA graphs, and as a function of $p$ in ER graphs, respectively. Figures~\ref{fig:numerical}(b) and (d) plot the time taken by the greedy and optimal solutions to compute $1$-LFS in BA and ER graphs, respectively. We observe that greedy and optimal solutions are very close, and the difference between the two solutions reduces as the graphs become dense. However, the difference in time to compute greedy and optimal solutions is orders of magnitude, even in small-sized graphs. We can design a similar greedy heuristic to compute an $\ell$-LFS for $\ell>1$; however, the time complexity will increase significantly with increasing $\ell$, and even the greedy heuristic will become inefficient for higher $\ell$. Thus, more efficient heuristics are needed to compute $\ell$-LFS for large $\ell$ values. 

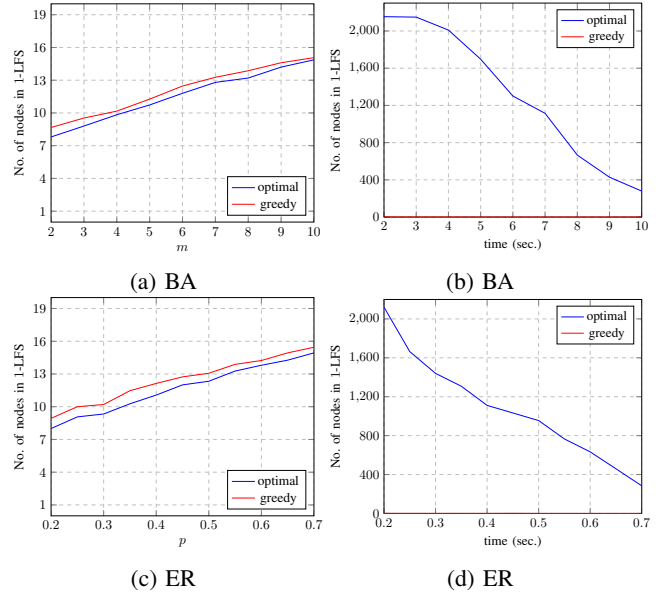
\begin{figure}[h]
\centering
\begin{subfigure}[b]{0.23\textwidth}
	\centering
\begin{tikzpicture}[scale=0.51]
\begin{axis}[
    xlabel={$m$},
    ylabel={No. of nodes in $1$-LFS},
    xmin=2, xmax=10,
    ymin=0, ymax=20,
    xtick={2,3,4,5,6,7,8,9,10},
    ytick={1,4,7,10,13,16,19},
    legend pos=south east,
    xmajorgrids=true,
    ymajorgrids=true,
    grid style=dashed,
]
\addplot[
    color=blue,
    ]
    coordinates {
     (2,7.8)(3,8.8)(4,9.833)(5,10.733)(6,11.8)(7,12.8)(8,13.2)(9,14.2)(10,14.8667)
    };
\addplot[
    color=red,
    ]
    coordinates {
    (2,8.6667)(3,9.5333)(4,10.1667)(5,11.2667)(6,12.4667)(7,13.2667)(8,13.8667)(9,14.6)(10,15.0667)
    };
    \legend{optimal,greedy}  
\end{axis}
\end{tikzpicture}
\caption{BA}
\end{subfigure}
\begin{subfigure}[b]{0.23\textwidth}
	\centering
\begin{tikzpicture}[scale=0.5]
\begin{axis}[
    xlabel={time (sec.)},
    ylabel={No. of nodes in $1$-LFS},
    xmin=2, xmax=10,
    ymin=0, ymax=2300,
    xtick={2,3,4,5,6,7,8,9,10},
    ytick={0,400,800,1200,1600,2000},
    legend pos=north east,
    xmajorgrids=true,
    ymajorgrids=true,
    grid style=dashed,
]
\addplot[
    color=blue,
    ]
    coordinates {
     (2,2154)  (3,2149)   (4,2010)   (5,1697)   (6,1301)   (7,1115) (8,666) (9,428) (10,279)
    };
\addplot[
    color=red,
    ]
    coordinates {
    (2,0.8168)    (3,0.6748)    (4,0.62)    (5,0.3861)   (6,0.2579)    (7,0.2147) (8,0.1911) (9,0.15) (10,0.138)
    };
    \legend{optimal,greedy}  
\end{axis}
\end{tikzpicture}
\caption{BA}
\end{subfigure}
\\
\begin{subfigure}[b]{0.23\textwidth}
	\centering
\begin{tikzpicture}[scale=0.51]
\begin{axis}[
    xlabel={$p$},
    ylabel={No. of nodes in $1$-LFS},
    xmin=0.2, xmax=0.7,
    ymin=0, ymax=20,
    xtick={0.2,0.3,0.4,0.5,0.6,0.7},
    ytick={1,4,7,10,13,16,19},
    legend pos=south east,
    xmajorgrids=true,
    ymajorgrids=true,
    grid style=dashed,
]
\addplot[
    color=blue,
    ]
    coordinates {
     (0.2000,8)(0.2500,9.0667)(0.3000,9.3333) (0.3500,10.2667)(0.400,11.0667)(0.4500,12)(0.500,12.3333)(0.55,13.2667)(0.6,13.8)(0.65,14.2667)(0.7,14.9333)
    };
\addplot[
    color=red,
    ]
    coordinates {
     (0.2000,8.9333)(0.2500,10)(0.3000,10.2) (0.3500,11.4667)(0.400,12.1333)(0.4500,12.7333)(0.500,13.0667)(0.55,13.88)(0.6,14.2333)(0.65,14.9333)(0.7,15.45)
    };
    \legend{optimal,greedy}  
\end{axis}
\end{tikzpicture}
\caption{ER}
\end{subfigure}
\begin{subfigure}[b]{0.23\textwidth}
	\centering
\begin{tikzpicture}[scale=0.5]
\begin{axis}[
    xlabel={time (sec.)},
    ylabel={No. of nodes in $1$-LFS},
    xmin=0.2, xmax=0.7,
    ymin=0, ymax=2200,
    xtick={0.2,0.3,0.4,0.5,0.6,0.7},
    ytick={0,400,800,1200,1600,2000},
    legend pos=north east,
    xmajorgrids=true,
    ymajorgrids=true,
    grid style=dashed,
]
\addplot[
    color=blue,
    ]
     coordinates {
     (0.2000,2121)(0.25,1664)(0.30,1440) (0.35,1308)(0.40,1111)(0.500,955)(0.55,766)(0.6,633)(0.65,459)(0.7,284)
    };
\addplot[
    color=red,
    ]
    coordinates {
    (0.2,0.8319)(0.25,0.6007)(0.3,0.5181)(0.35,0.4489)(0.4,0.3105)(0.45,01.2040)(0.5,0.1887)(0.55,0.1673)(0.6,0.14)(0.65,0.14)(0.7,0.12)
    };
    \legend{optimal,greedy}  
\end{axis}
\end{tikzpicture}
\caption{ER}
\end{subfigure}
\caption{Comparison of optimal and greedy heuristic (Algorithm~1) for the computation of $1$-LFS in Erd\"os-R\'enyi (ER) and Barab\'asi-Albert (BA) random graphs. }
\label{fig:numerical}
\end{figure}

\section{Conclusion}
\label{sec:conclusion}
We studied the problem of maintaining SSC in networks despite misbehaving nodes and edges. We considered different models of misbehaving nodes/edges aiming to disrupt the zero forcing process in graphs, consequently deteriorating the network SSC. 
We showed that resilience to one type of misbehaving nodes/edges is possible if and only if resilience to other threat types is ensured. In other words, for a given leader set, if a network is strong structurally controllable despite $k$ misbehaving nodes/edges under one threat model, then the network remains to be strong structurally controllable in the presence of the same number of misbehaving nodes/edges under the other threat models. We also discussed the leader selection guaranteeing resilient SSC. Resilience is expensive in the context of SSC because many extra leaders are needed to nullify the effect of a small number of misbehaving nodes and edges. 

In the future, we aim to address this problem by safeguarding a subset of nodes and edges, i.e., making them less susceptible to faults and adversarial attacks, and then analyzing how the number of leaders for resilient SSC is impacted in the presence of such `trusted' nodes and edges. Since it would be expensive or infeasible to make every node/edge trusted, we would need to select them in the network carefully. Another interesting direction in this domain requiring more exploration is designing efficient heuristics to compute leaders for resilient SSC.


\bibliographystyle{IEEEtran} 
\bibliography{references}  
\end{document}